\documentclass{article}\textheight9in\textwidth6.5in
\begin{document}

 \newcommand\II{\makebox[0pt][l]{\hspace*{1pt}I}}
 \newcommand\N{{\mbox{\bf\II N}}} \newcommand\Q{{\mbox{\bf\II Q}}}
 \newcommand\qed{\rule{1em}{1.5ex}} \newcommand\ov\overline
 \newcommand\SKH{\cal H} \newcommand\SKS{\cal S} \newcommand\SKF{\cal F}
 \newcommand\SKJ{\cal J} \newcommand\SKN{\cal N} \newcommand\SKM{\cal M}
 \newcommand\I{{\mbox{\bf I}}}
 \renewcommand\S{{\mbox{\bf S}}} \newcommand\M{{\mbox{\bf M}}}
 \newcommand\D{{\mbox{\bf D}}} \newcommand\K{{\mbox{\bf K}}}
 \newcommand\m{{\mbox{\bf m}}} \renewcommand\d{{\mbox{\bf d}}}

 \newtheorem{thr} {Theorem} \newtheorem{lem} {Lemma} \newtheorem{nt} {Note}
 \newtheorem{prp}{Proposition} \newtheorem{dfn} {Definition}
 \newtheorem{Postulate}{Postulate} \newtheorem{law}{Law}
 \newtheorem{cor} {Corollary}

 \title {\null\vspace{-1in} Randomness Conservation Inequalities;\\
	Information and Independence in\\ Mathematical Theories\thanks
{Information and Control, 61(1):15-37, April 1984}\thanks {Supported in 1978-83
by NSF grants \#: MCS 77-19754, 81-04211 and 83-04498.}} \author {Leonid
A.~Levin\\
 Boston University, Boston Massachusetts and\\ Massachusetts Institute of
Technology, Cambridge, Massachusetts.\thanks {Correspondence should be
addressed to the author at 150-3 Kenrick Street, Boston, MA 02135.}}

 \date{} \maketitle \begin{abstract}
 The article develops further Kolmogorov's Algorithmic Complexity Theory. The
definition of Randomness is modified to satisfy strong invariance properties
(conservation inequalities). This allows definitions of concepts such as Mutual
Information in individual infinite sequences. Applications to several areas,
like Probability Theory, Theory of Algorithms, Intuitionistic Logic are
considered. These theories are simplified substantially with the postulate that
the objects they consider are independent of (have small mutual information
with) any sequence specified by a mathematical property. \end{abstract}

\vspace{1pc}\centerline{\bf\large I. ALGORITHMIC INFORMATION}

 \addtocounter{section}{-1} \section{\label{0} Initial Remarks}
	\subsection{\label{0.1} Introduction}

 Recursive function theory provides analogues of many concepts of classical
analysis by requiring countable sets considered to be recursively enumerable
(r.e.). The analogy is quite close: intrinsically non-algorithmic methods are
rare in mathematics. Moreover, the general Theory of Algorithms is very similar
to descriptive set theory. There is, however, an important exception in the
existence of universal algorithms. The set of all (countable) sets of integers
is uncountable while the set of r.e. sets is r.e.. This rather abstract
difference opens, however, new analytical possibilities having no analogies in
``non-algorithmic" analysis. Let us illustrate this with a simple but important
example.

 Let $l_1\subset\Re^\N$ be the space of all absolutely summable real sequences:
$p \in l_1$, iff $\sum |p(x)|<\infty$. Its recursive analogue $\ov{l_1}\subset
l_1$ consists of elements of $l_1$ whose subgraph $\{(r,x):\ p(x)> r\in \Q\}$
is r.e. It is known in calculus that no element is maximal in $l_1$ within a
constant factor: $\forall p\in l_1\exists q\in l_1\lim q(x)/p(x)=\infty$. In
contrast to this $\ov{l_1}$ has an ``absorbing" element $\m$ (a {\em Universal
Measure}) such that $$\forall q\in\ov{l_1}\sup(q(x)/\m(x))<\infty.
	\mbox{\hspace{4pc} Here }\m(x)=\sum r_i(x)/2i^2,$$
 where $\{r_i\}$ is an r.e. family of all non-negative r.e. sequences with the
sum bounded by $1$.

 {\em Complexity} $\K(x)=-\lfloor\ln\m(x)\rfloor$ is closely related to the
length of a shortest program generating $x$ in an optimal language discovered
in [Kolmogorov 65], [Solomonoff]. Their work originated an invariant approach
to Information Theory, foundations of Probability Theory, Inductive Inference
and a number of other areas. Any function of integers, invariant with respect
to all recursive transformations, is a constant. However {\K} is {\em
approximately} invariant, i.e. $\sup(\K(\varphi(x))-\K(x))<\infty$ for any
recursive $\varphi$. Concepts like {\em Mutual Information} $\I(x:y)=\K(x)+
\K(y)-\K(x,y)$ or {\em Deficiency of Randomness} (with respect to a
\mbox{measure $\mu$):} $|\ln\mu(x)|-\K(x)$ also have attractive invariance
properties suitable for many applications. These idea of Algorithmic
Information Theory, are based on analytical features arising in the recursive
analogues of some spaces of classical analysis, as in the above example.

 We hope to introduce the reader to the general spirit of the theory by
choosing a particular problem and developing the concepts needed for its
solution. The problem used for organizing this work is to formalize, justify,
and apply the following physical principle:

 Let $R$ be the reference to a physical process generating sequence $\alpha_R$.
Let $P$ be a (non-recursive) mathematical property specifying sequence
$\beta_P$. The lengths of $R$ and $P$ may be {\em negligible} compared to the
informational content of $\alpha_R$ and $\beta_P$. E.g., $R$ may be the
bibliographical reference to a book $\alpha_R$ and $P(\beta)$ may be ``$\beta$
is the first sequence which cannot be generated by a program of $<10^{10}$
bits". \mbox{Then predicted is:}

\begin {Postulate} [Independence] The sequences $\alpha_R$ and $\beta_P$ are
independent, i.e. $I(\alpha_R:\beta_P) < |R|+|P|$.\end{Postulate}

 A special case of $\alpha=\beta$ gives a ``finitary Church's Thesis": Every
``physically existing" sequence must be ``finitary recursive", i.e. have
approximately as short recursive expression as any of its non-recursive ones.
For a more typical example, $\beta$ might be specified as ``the set of all true
arithmetical assertions of $<10^{10}$ symbols" and $\alpha$ might be the
library of all mathematical publications. To implement these ideas a function
$\I:\ \N{}^\N\times\N{}^\N\to\Re^+$ must be defined satisfying various
intuitive and technical requirements.

 In part II the random sequences, the intuitionistic free-choice sequences, and
the representatives of ``regular" Turing degrees respectively are considered as
$\alpha$. In each of these three cases a formalization of Independence
Postulate is shown to simplify radically the corresponding theories.

\subsection{\label{0.2} Brief References}

 The following remarks do not present the history of the area and concern
mainly the works directly used here. Algorithmic Information Theory originated
with the discovery of universal coding and a recursively invariant approach to
the concepts of Complexity, Information, Randomness and {\em a priori}
Probability ([Kolmogorov 65, Solomonoff]). ``Uspekhi Mat. Nauk" announced A.N.
Kolmogorov's talks on this subject in 1961 and consecutive years. Some of R.J.
Solomonoff's ideas were mentioned in [Minsky] and preprints. See also [Markov]
and [Chaitin 66,68].

 Despite the depth of the main idea, the technical expression of basic
quantities was not accurate. Many important relationships hold only with an
error such as the logarithm of Complexity. This error rate is negligible in
comparison to Complexity itself, but can exceed such differences as Mutual
Information, Deficiency of Randomness etc. The errors distorted the picture and
hindered the development of a transparent theory. The concept of Randomness was
improved in [Martin-Lof] for the case of recursive measures. It was not,
however, extendable to other important cases nor expressible in terms of a
measure-independent notion like Complexity. Very interesting studies of
Randomness were made in [Schnorr]. Some of its ideas proposed independently of
[Zvonkin, Levin 70] are related to Proposition~\ref{p4} and
Proposition~\ref{p5} below.

 The problem of giving a precise expression for Information proved to be more
difficult. The first non-trivial results were obtained by A.N. Kolmogorov and
L.A. Levin in 1967. The initial definition of $\I(x:y)$ from [Kolmogorov 65]
was asymmetric and not monotone over $y$ (with respect to projection $(y_1,y_2)
\to y_1$). In [Kolmogorov 68], [Zvonkin, Levin] this definition was
demonstrated to coincide, within logarithm of complexity, with a symmetric
expression and to be therefore approximately monotone over both arguments.

 In [Zvonkin, Levin] the Universal Measure was introduced. Its logarithm (equal
to the length of the shortest self-delimiting (or prefix) code) turned out to
be a more satisfactory complexity measure on \N\ than the original proposal
from [Kolmogorov 65]. It allowed improvement of the definitions of Randomness
([Levin 73]) and Information ([Levin 74]). The new definition of Information
was monotonic within an additive constant (rather than logarithm) and
extendable to the case of infinite sequences. This work is related to subtle
results of [Gacs 74] concerning the differences between the symmetric and
asymmetric expressions for Information. A number of results of [Kolmogorov 68,
Levin 70,73,74, Gacs 74] were also found independently in [Chaitin 75]. A
version of the present paper appeared as an MIT technical report
MIT/LCS/TR-235(1980) and some results were formulated in [Levin 70-77].

\subsection{\label{0.3} Conventions}

 $\N,\Q,\Re$ are the sets of natural, rational and real numbers. $T^+$ is the
subset $\{x:\ x\ge0\}$ of an ordered set $T\supset\{0\}$, and $\ov{T}$ is
$T\cup\{\infty\}$. The integer part of $x\in\Re$ is $\lfloor x\rfloor$. The
number $m+((m+ n) (m+ n+ 1)/2)$ is called {\em the pair} (m,n) of numbers
$m,n\in\ov\N$. This enumeration of pairs is bijective on
$\N^2\subset\ov\N{}^2$. Cantor's perfect set is represented in the form
$\Omega=\ov\N{}^\N$ which has simpler (than $\{0,1\}^\N$) expression for pairs:
$(\alpha,\beta)(i)= (\alpha(i),\beta(i))$, where $\alpha(i), \beta(i)\in\ov\N$
are the i-th terms of $\alpha$ and $\beta$. Let $S_k=\{0,1, ...,k,\infty\}^k$
and $S=\cup S_k$; The {\em length} $l(x)$ of $x\in S_k$ is $k$. If
$\alpha\in\Omega$ or $\alpha\in S_n$ and $k\le n$, then $\alpha_k\in S_k$ is
the initial $k$-segment of $\alpha$ with all terms $\alpha(i)> k$ replaced by
$\infty$; $x\subset\alpha$ or means $x=\alpha_{l(x)}, l(x)\le l(\alpha)$.

 Any property $P$ is identified with the set $\{x:P(x)\}$ and its
characteristic function. An open subset of a topological space with natural
countable basis is called {\em recursively enumerable (r.e.)} if it equals the
union of an r.e. family of basis sets. A real function $F$ on such set is
called r.e. if its {\em subgraph,} i.e., the set $\{(x,r):\ r< F(x)\}$ is r.e.
It is called {\em recursive} if $F$ and $-F$ are r.e. The symbols $\prec,
\succ$ and $\asymp$ denote inequality and equality of functions within an
additive constant; $\preceq, \succeq$ and $\simeq$ denote these relations
within a constant factor. Such operations, as $\sum f, \sup f, \min f$, etc.
are assumed taken over the values of {\em all} variables of the term $f$, not
bounded in the context.

 Conventions for section~\ref{2}. Let $\SKF$ be the space of continuous
functions $f:\ \Omega\to\Re$ with the norm $||f||=\sup|f(x)|$; Its countable
dense subset $\SKF'\subset\SKF$ consist of the functions whose ranges are
finite sets of rationals. Let $\SKJ$ be the space of lower semicontinuous
functions $\Omega\to\ov\Re$ and $\SKM$ be the set of positive linear
functionals $\SKF\to\Re$. Any $\mu\in\SKM$ represents a measure on $\Omega$ and
$\mu(f)$ is the average value of $f$. Let $\SKN$ be the set of positive linear
operators $\SKF\to\SKF$. Any $A\in\SKN$ represents a continuous random
transformation of $\Omega$ and $A(f)$ maps $\alpha$ to the average value of $f$
on the image of $\alpha$. We identify $\alpha\in\Omega$ with the measure
$\mu_\alpha(f)= f(\alpha)$, and a deterministic transformation $A:\
\Omega\to\Omega$ with the operator $f\to g$, where $g(\alpha)= f(A(\alpha))$.
Restricting $\mu\in\SKM$ to $S\subset \SKF$ gives $\mu':\ S\to\Re$, such that
$\mu'(x)=\sum\mu' (y),y\in \sigma(x)$, where $\sigma(x)=\{y:\ y\supset x,l(y)=
l(x)+1\}$ and $\mu$ is uniquely determined by such $\mu'$. Any r.e. measure is
recursive.

 Random partial processes generate sequences which may stop after a finite
number of terms. Their probability distributions satisfy only the inequality
$\mu(x)\ge\sum\mu(y), y\in\sigma(x)$. This leads to the space $\SKH$ of {\em
semimeasures,} i.e., positive, uniform, concave functionals $\mu:\ \SKF\to\Re$,
where $\mu(\SKF^+)\subset\Re^+$ and $\mu(\rho f+ g)\ge\rho\mu(f)+\mu(g)$ for
$\rho\in\Re^+$. \SKH\ is normed by $||\mu|| =|\mu(-1)|$ and ordered as
functions on $\SKF^+$. Any semimeasure equals the infinum of the measures
majorizing it. Let $\ov\mu$ be the largest measure not exceeding $\mu$. A
semimeasure $\mu$ like a measure can be extended to $-\SKJ$, as
$\mu(f)=\inf\{\mu(f'):\ f\le f '\in\SKF\}$ and to $(\ov\Re{}^+)^\Omega$, as
$\mu(f)=\sup\{\mu(f'): $ $f\ge f'\in-\SKJ\}$. If $f\in\SKJ$, then
$\mu(f)=\sup\{\mu(f '):\ f\ge f'\in\SKF\}$. Random partial transformations are
represented by positive uniform concave operators $A:\ \SKF\to\SKJ$ forming the
set $\SKS$. By $A(\mu)$ we mean $\mu'\in\SKH$ such that: $\mu'(f)= \mu(A(f))$.
All this considerations can be justified using the Hahn-Banach Theorem and
related results of functional analysis.

 The proofs are succinct and require slow reading with frequent reference to
the present section. They are, however, independent: one may skip many of them
and still understand the others.

\section{\label{1} Discrete Case}

 \subsection{\label{1.1} Complexity, Randomness and Information}

 Complexity $\K(x)$ defined in \ref{0.1} determines the length of the shortest
codes of integers $x$. A {\em prefix} algorithm $A:\ \{0,...,k-1
\}\mbox{*}\to\N$ is one defined on at most one prefix of any string.
Informally, A recognizes the end of input with no special mark and rejects any
continuation. So it has a truly $k$-ary input alphabet. The volume ${|}x|$ of
$x\in\{0,...,k-1\}^{n}$ is $\lfloor n\cdot\ln k\rfloor$.

 \begin{prp} [Coding]\label{p1} A prefix algorithm $A$ (the Huffman code)
exists generating any $x\in\N$ from some input of volume $\K(x)$; i.e.,
$\exists A,c\ \forall x\exists p\ (A(p)= x$ and $|p|=\K(x)+ c)$.

 No prefix algorithm $A$ can be better i.e., $\forall A\exists c\forall
x\forall p ((A(p)= x)\ \Rightarrow\ |p|\ge\K(x)- c)$.\end{prp}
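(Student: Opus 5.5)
The plan is to prove the two halves separately. The first half comes from a Kraft--Shannon--Fano style construction on the universal measure $\m$. The second half comes from the maximality of $\m$ in $\ov{l_1}$ stated in \ref{0.1}. Throughout, volumes are measured in natural logarithms: $|p|=\lfloor n\ln k\rfloor$ for $p$ of length $n$ over a $k$-ary alphabet, matching $\K(x)=-\lfloor\ln\m(x)\rfloor$.

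For the lower bound (no prefix algorithm is better), I fix a prefix algorithm $A$ with $k$-ary input and set $q(x)=\sum\{k^{-l(p)}:\ A(p)=x\}$. The domain of $A$ is prefix-free, so Kraft's inequality gives $\sum_x q(x)\le1$. Since $A$ is an algorithm, its graph is r.e., so the subgraph of $q$ is r.e. and $q\in\ov{l_1}$. By the absorbing property of $\m$ there is a constant $c'$ with $q(x)\le c'\m(x)$. If $A(p)=x$, then $e^{-|p|}\le e\cdot k^{-l(p)}\le e\,q(x)\le ec'\m(x)$. Taking logarithms gives $|p|\ge-\ln\m(x)-\ln(ec')\ge\K(x)-c$.

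For the upper bound (existence of a good coding), I use a binary alphabet ($k=2$, or more conveniently any fixed $k$) and build codewords online from the enumeration of $\m$. I enumerate pairs $(x,r)$ with rational $r<\m(x)$. Whenever the current lower approximation of $\m(x)$ first exceeds $e^{-j}$, I request a codeword of length $n_j=\lceil j/\ln k\rceil+1$ for $x$. For each $x$, the requests form a geometric series dominated by about twice the largest one. The total requested Kraft weight is therefore $\sum_x\sum_{j\ge\K(x)}k^{-n_j}\le\sum_x O(e^{-\K(x)})\le O(1)\cdot e\sum\m\le O(1)$. Enlarging each $n_j$ by a constant makes the total weight at most $1$.

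I then allocate the codewords by the standard online Kraft allocation: keep a set of free dyadic ($k$-adic) intervals and assign each request the leftmost free interval of the right size, in the manner of Zvonkin--Levin/Chaitin. This keeps the assigned words prefix-free as long as the cumulative weight is at most $1$. $A$ maps each assigned word to its $x$. It is a prefix algorithm because the codeword set is r.e. and prefix-free. The request with $j=\K(x)$ eventually occurs, since $\m(x)>e^{-\K(x)-1}$; to be precise I trigger on $e^{-j-1}$. That request yields a $p$ with $|p|\le\K(x)+c$.

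To get the exact equality $|p|=\K(x)+c$ claimed, I pad: every codeword of volume less than $\K(x)+c$ can be extended by a few symbols. Since the volume grows in steps of $\ln k$, and $\K$ is integer-valued, I choose $k$ or else interleave lengths so that volume $\K(x)+c$ is hit exactly. One way is to let the requests for $x$ target volume exactly $j+c$ for each integer $j$, using a base $k=e^{}$-approximating rational scheme. Alternatively, I note that the floor in $|x|$ lets me pick $n$ with $\lfloor n\ln k\rfloor=j+c$ whenever $\ln k<1$, i.e. $k=2$. With $k=2$, every integer is a value $\lfloor n\ln2\rfloor$, so exact volumes are available.

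The main obstacle is the online prefix-free allocation. I have to show the greedy left-to-right assignment never fails when requests arrive in arbitrary order of sizes with total weight at most $1$. I will handle this with the standard invariant that the free region is always a union of intervals of distinct sizes. The bookkeeping for exact volume comes second.
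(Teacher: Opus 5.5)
Your proof is correct, but the first half takes a different route from the paper. The second half (no prefix algorithm is better) is essentially the paper's argument. Your Kraft sum $q(x)=\sum\{k^{-l(p)}:A(p)=x\}$ is exactly the uniform measure $B(A^{-1}(x))$ of the infinite $k$-ary sequences that $A$ sends to $x$. The paper notes this is r.e.\ with total at most $1$, hence $\preceq\m$.

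In the first half you differ from the paper only at the allocation step. Like you, the paper requests a code for every threshold above $\K(x)$: it enumerates the r.e.\ set $G=\{(x,n):n>\K(x)\}$ with weights $e^{-n}$ and bounds the total by the same geometric sum. It then lays the intervals of length $e^{-n}$ end to end in enumeration order, as $[\mu(t),\mu(t+1))$ with $\mu$ the running sum, and takes for $p$ a largest $k$-adic cylinder inside the $t$-th interval. Disjointness of the intervals makes the code prefix-free automatically, at the cost of an additive constant in volume, which the proposition tolerates. So the step you single out as the main obstacle, an online Kraft--Chaitin allocation with an invariant on the free region, is not needed.

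What your route buys is exact lengths. Take $k=2$, and let the request indexed by $j$ ask for a word of length $n$ with $\lfloor n\ln 2\rfloor=j+c$; this is always possible since $\ln 2<1$. Then the literal equality $|p|=\K(x)+c$ comes out directly. The paper's proof leaves exactness implicit; it can be recovered there by padding the codeword of request $(x,n)$ to a volume depending only on $n$.

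Two small remarks:
\begin{enumerate}
\item Your leftmost-fit rule does work for $k=2$. By induction, the free region is a union of maximal dyadic blocks of strictly increasing size from left to right. Hence leftmost-fit coincides with Chaitin's best-fit, and the distinct-sizes invariant is preserved. For $k>2$ the invariant would instead be at most $k-1$ free blocks of each size.
\item In your exactness paragraph, drop the idea of padding codewords up to volume $\K(x)+c$, and drop the ``$k\approx e$'' scheme. Since $\K(x)$ is not computable, padding only makes sense per request, targeting $j+c$ for request $j$, which is the version you end with.
\end{enumerate}
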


\paragraph{Proof:} The set $G=\{(x,n):\ n>\K(x)\}$ is r.e. Then a recursive
bijection $f:\ \N\to G$ exist. Let $\lambda(x,n)=e^{- n}$ and $\mu(t)=
\sum\{\lambda(f(t')):\ t'< t\}$. Then $\mu(\infty)=\sum\lambda(x,\K(x)+ i)=\sum
e^{-\K(x)}\sum e^{- i}<1$. Let $f(t)=(x,n)$ and $.p$ be the shortest $k$-ary
fraction within $(\mu(t),\mu(t+1))$. Then $A(p)=x$.

	Vice versa, $A$ is defined on at most one prefix of any string, and is
extendable to $\{0,...,k-1 \}^\N$. The uniform measure on $\{0,...,k-1\}^\N$ is
$B(\{\alpha:\ \alpha \supset q\})=k^{- l(q)}$. If $K'(x)\asymp\min\{|q|:A(q)=
x\}$ then, obviously, $e^{- K'(x)}\prec\ B(A^{-1}(x))<1$ and $e^{- K'}
\in\ov{l_1}$. \qed

 So $\K(x)$ measures the minimal information needed to generate $x$. This makes
definition of $\I(x:y)=\K(x)+\K(y)-\K(x,y)$ more intuitive and also agrees with
Shannon's idea that the amount of information in an event equals the negative
logarithm of its probability. Obviously, $\I(x:y)\succ\ 0$, because
$(\m\otimes\m)(x,y)=\m(x)\m(y)$ is an r.e. measure and thus $\m\otimes\m\
\preceq\ \m$.

 Let us arrive at the expression $\I(x:y)$ from another point of view: the
concept of Randomness. Let $\mu$ be a recursive measure on {\N} and $t:\
\N\to\ov\Re{}^+$ be an r.e. function with average value $\mu(t)\le 1$. If
$x\in\N$ appears randomly with probability $\mu(x)$ one may expect $t(x)$ to be
not much larger then average. Then $\lfloor\ln t(x)\rfloor$ may serve as a
randomness test ($\mu$-test) for $x$.

\begin{nt}\label{n1} For any r.e. measure $\mu,\d(x/\mu)=\lfloor\ln(\m(x)/\mu
(x)) \rfloor$ is the largest (within an additive constant) $\mu$-test.\end{nt}

 So, $\d(x/\mu)\ \asymp\ |\ln\mu(x)|-\K(x)$ is, in a sense, a universal
characteristic of ``non-randomness", called the {\em Randomness Deficiency of
$x$ with respect to $\mu$}. Motivations, some history and the general
formulation of the concept of Randomness are discussed in Chapter \ref{3}.

 Let two random variables be independent and have the same distribution $\mu$.
Then their joint distribution is $\mu^2(x,y)=\mu(x)\mu(y)$. Suppose a pair
$(x,y)\in \N^2$ looks random for probability distribution $\m^2$, i.e.,
$\d((x,y)/\m^2)$ is small. This means the same as that ``(1) $x$ and $y$ look
independent and (2) each of them looks random for distribution \m". But (2) is
vacuously true, since {\em all} numbers look random for the universal
distribution \m: $\d(x/\m)\equiv 0$.

 Therefore, the smallness of $\d((x,y)/\m^2)$ means only that $x$ and $y$ could
be generated independently of each other. It is natural then to consider
$\d((x,y)/\m^2)=\I(x:y)$ as the {\em deficiency of independence.} This reminds
the theorem of classical Probabilistic Information Theory that two random
variables are independent iff they have no mutual information. The difference
is that the concepts given above are applicable to individual objects
themselves, and not only to their probability distributions (i.e., random
variables).

\subsection {\label{1.2} Conservation of Independence}

The Information $\I(x:y)$ has a remarkable invariance: It cannot be
increased by random or deterministic (recursive) processing of $x$ or
$y$. This is natural, since if $x$ contains no information about $y$
then hope is little to find out something about $y$ by processing $x$.
(Torturing an uninformed witness cannot give information about the
crime!)


 \begin{prp} [Independence Conservation] \label{p2}
 Let $f:\ \N\to\N$ be a recursive function, and $\varphi$ be an r.e. measure on
\N. Then\begin{enumerate}
 \item $\I(f(x):y)\ \prec\ \I(x:y)$,
 \item $\int\exp(\I((x,z):y))d\varphi(z)\ \preceq\
\exp(\I(x:y))$.\end{enumerate} \end{prp}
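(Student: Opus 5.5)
The plan is to pass to exponential form. Since $\K=-\lfloor\ln\m\rfloor$, we have $\exp(\I(x:y))\simeq \m(x,y)/(\m(x)\m(y))$ up to a constant factor. Both claims then become inequalities between ratios of values of $\m$, which I would prove from the maximality of $\m$ in $\ov{l_1}$ (Section~\ref{0.1}). Throughout I use that any r.e. function with bounded sum is $\preceq\m$.

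\emph{Part 2.} I would do this part first, since it is direct. The integral equals
$$\sum_z\varphi(z)\,\m(x,z,y)/(\m(x,z)\m(y)).$$
The map $(x,z)\mapsto\m(x)\varphi(z)$ is r.e. with sum at most $1$, so $\m(x,z)\succeq\m(x)\varphi(z)$. Hence each term is $\preceq \m(x,z,y)/(\m(x)\m(y))$. Next, $(x,y)\mapsto\sum_z\m(x,z,y)$ is r.e. with total sum at most $1$, so it is $\preceq\m(x,y)$. Together these give the sum $\preceq\m(x,y)/(\m(x)\m(y))\simeq\exp(\I(x:y))$.

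\emph{Part 1.} First, $x$ and $(x,f(x))$ are mutually recursively computable, so approximate invariance of $\K$ gives $\I(x:y)\asymp\I((x,f(x)):y)$. It then suffices to prove monotonicity under projection, $\I(v:y)\prec\I((u,v):y)$, taking $v=f(x)$ and $u=x$. In exponential form this reads
$$\m(u,v)\,\m(v,y)/\m(v)\ \preceq\ \m(u,v,y).$$
The left side has bounded total sum, because $\sum_u\m(u,v)\preceq\m(v)$. However, it is not obviously r.e., because it divides by the merely lower-semicomputable $\m(v)$. This is the main obstacle.

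My first attempt at it would be an addition theorem $\K(a,b)\asymp\K(a)+\K(b\,|\,a^*)$, where $a^*$ is a shortest prefix code of $a$ and $\K(\cdot|\cdot)$ is defined via the conditional universal measure. I would prove it with the coding of Proposition~\ref{p1}, as follows.

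For the direction $\prec$: from $a^*$ and a code of $b$ relative to $a^*$, a prefix algorithm outputs $(a,b)$.

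For the direction $\succ$: for each code $p$ of $a$ of length $n$, the function $b\mapsto \m(a,b)e^{n}$, truncated so that its sum does not exceed $1$, is r.e. uniformly in $p$. Here I use $\sum_b\m(a,b)\preceq\m(a)$ and the fact that $a^*$ determines $\K(a)=|a^*|$. This yields $\K(b|a^*)\prec\K(a,b)-\K(a)$.

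Applying the addition theorem with $a=v$ and with $a=(u,v)$ reduces the monotonicity to $\K(y\,|\,(u,v)^*)\prec\K(y\,|\,v^*)$. This in turn follows from recovering $v^*$ (up to $O(1)$ choices) from $(u,v)^*$ together with the quantity $\K(u,v)-\K(v)$. That quantity is itself absorbed by the same addition theorem applied to $(u,v)$ versus $v$. I expect this bookkeeping, which keeps errors within an additive constant rather than a logarithm, to be the delicate step. An alternative I would try, if it does not close, is to enumerate $\m$ in stages and build an r.e. measure that dominates $\m(u,v)\m(v,y)/\m(v)$ by redistributing mass only when the stage approximation of $\m(v)$ doubles.
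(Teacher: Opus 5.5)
Your Part~2 is correct and is essentially the paper's argument. You bound $\m(x)\varphi(z)\preceq\m(x,z)$ directly, where the paper first uses $\varphi\preceq\m$ and then $\m(x)\m(z)\preceq\m(x,z)$. In Part~1, three pieces match the paper's Lemmas~1 and~2: the reduction to $\K(u,v,y)\prec\K(u,v)+\K(v,y)-\K(v)$, and both directions of your addition theorem. Conditioning on $v^*$ is interchangeable with conditioning on $(v,\K(v))$.

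The gap is the last step. Applying the addition theorem to the pair $((u,v),y)$ leaves you needing $\K(y\,|\,(u,v)^*)\prec\K(y\,|\,v^*)$. Recovering $v^*$ from $(u,v)^*$ plus the integer $d=\K(u,v)-\K(v)$ only yields
\[
\K(y\,|\,(u,v)^*)\prec\K(y\,|\,v^*)+\K(d\,|\,(u,v)^*).
\]
Nothing in your argument makes the last term $O(1)$. Here $d\asymp\K(u\,|\,v^*)$ is not something $(u,v)^*$ hands you, and supplying it as advice costs a term of order $\ln d$. That is precisely the logarithmic error the conservation law must avoid. ``Absorbed by the same addition theorem'' does not say where this cost goes; you flag this bookkeeping yourself as unresolved. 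The reduced inequality is true. It could be rescued by mixing $\m(y\,|\,(v,n))$ over all $n\ge\K(v)$ with weights $e^{\K(u,v)-n}\m(u\,|\,(v,n))$. These weights sum to $O(1)$ because $\sum_{n\ge\K(v)}e^{-n}\m(u\,|\,(v,n))\preceq\m(u,v)$, and the weight at $n=\K(v)$ is $\succeq1$ by the hard direction. But that is an extra idea you have not supplied.

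The simpler repair, which is what the paper does, is never to condition on $(u,v)^*$. Expand everything over the common part $v$:
\[
\K(u,v,y)\prec\K(v)+\K(u,y\,|\,v^*)\prec\K(v)+\K(u\,|\,v^*)+\K(y\,|\,v^*).
\]
This uses your easy direction and subadditivity of conditional complexity; the latter holds since $\m(u\,|\,t)\m(y\,|\,t)$ is an r.e.\ conditional semimeasure in $(u,y)$. Then apply the hard direction, which you have already proved, twice: $\K(u\,|\,v^*)\prec\K(u,v)-\K(v)$ and $\K(y\,|\,v^*)\prec\K(v,y)-\K(v)$. Adding these gives the required inequality with only additive constants, and your Part~1 then goes through. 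Your concern about dividing by the non-computable $\m(v)$ is exactly what conditioning on $v^*$, which fixes $\K(v)$, resolves.
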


 The linear scale (instead of the logarithmic one) strengthens (2) and is more
natural with linear operator of $\int$. Proposition~\ref{p2} is an elementary
version of the Corollary of Theorem~\ref{t1} below and also implies
independence conservation in any combination of random and deterministic
(recursive) processes. This supports the Independence Postulate in \ref{0.1}.

 \paragraph{Proof:}

 \begin{lem}\label{l1} (P. Gacs): $\K(x,\K(x))\asymp\K(x)$.\end{lem}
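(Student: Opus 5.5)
The direction $\K(x)\prec\K(x,\K(x))$ is immediate. The projection $(x,n)\mapsto x$ is recursive, and $\K$ is approximately invariant under recursive maps: $\sum_n\m(x,n)$ is an r.e. sequence with bounded sum, so it is $\preceq\m(x)$. This gives $\m(x,\K(x))\preceq\m(x)$. The work is in the converse, $\m(x)\preceq\m(x,\K(x))$. For that I will build an r.e. function $q$ on pairs with $\sum q\le 1$ and $q(x,\K(x))\succeq\m(x)$, and then use the universality of $\m$ from \ref{0.1}.

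The difficulty is that $\K(x)$ is only approximable from above. Enumerating $\m$ from below yields a non-increasing sequence of integer values $k_t(x)=-\lfloor\ln \m_t(x)\rfloor$, and these converge to $\K(x)$. For every value $k$ that $k_t(x)$ takes at some stage, I put $q(x,k)=e^{-k}$. This is r.e.: whenever the current approximation reaches a new value $k$, we enumerate mass $e^{-k}$ on $(x,k)$. Since $\K(x)$ is among the values taken, $q(x,\K(x))=e^{-\K(x)}\asymp\m(x)$ up to a constant factor.

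The step that needs checking is the bound on $\sum q$. For fixed $x$, the values taken are distinct integers, all $\ge\K(x)$. So $\sum_k q(x,k)\le\sum_{k\ge\K(x)}e^{-k}=e^{-\K(x)}/(1-e^{-1})\preceq\m(x)$. Summing over $x$ gives a finite total. If it exceeds $1$, I scale $q$ by a constant, which keeps it r.e. Then $q\in\ov{l_1}$ on $\N^2$ (identified with $\N$ via the pairing), so $q\preceq\m$. Hence $\m(x,\K(x))\succeq e^{-\K(x)}$, i.e. $\K(x,\K(x))\prec\K(x)$.

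The main obstacle is exactly this geometric-series bound. It works because the approximations are integers that are distinct and bounded below by the limit $\K(x)$, so wrong guesses cost only a constant factor of the mass. Care is needed only to make sure each value is counted once, and that the enumeration of $\m$ can be taken monotone from below, which holds since $\m$ is r.e.
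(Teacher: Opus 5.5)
Your argument is correct in substance, and it takes a different route from the paper.

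\textbf{The two routes.} The paper goes through the coding theorem (Proposition~\ref{p1}). A shortest code $p$ for $x$ under the optimal prefix algorithm determines $x$, and through its length it also determines $\K(x)$. So $p$ serves as a code for $(x,\K(x))$ as well. The other direction is treated as obvious.

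You never pass to codes. You build an r.e. function on pairs directly from lower approximations of $\m$, paying $e^{-k}$ for each integer guess $k$. The geometric series then shows that the wrong guesses cost only a constant factor. In effect you inline the mass accounting hidden in the proof of Proposition~\ref{p1}: its $\sum e^{-\K(x)}\sum e^{-i}$ is your series.

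What each buys: the paper's version is a one-liner once Proposition~\ref{p1} is available, and it conveys the intuition that a self-delimiting program encodes its own length for free. Yours uses only the universality of $\m$ from \ref{0.1}. It also relativizes verbatim to $\m(\cdot/y)$, which is the form the paper later invokes in the proof of Lemma~\ref{l3}.

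\textbf{A detail that needs a patch.} Your claim that $k_t(x)=-\lfloor\ln\m_t(x)\rfloor$ eventually equals $\K(x)$ fails if $\m(x)=e^{-n}$ exactly. The rational lower approximations never reach this irrational value. So $k_t(x)\ge n+1=\K(x)+1$ at every stage, and $q(x,\K(x))=0$.

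In general the limit is either $\K(x)$ or $\K(x)+1$. So for each value $k$ that appears, put mass $e^{-k}$ on both $(x,k)$ and $(x,k-1)$. This at most doubles your bound on $\sum q$, and it gives $q(x,\K(x))\ge e^{-\K(x)-1}$ in all cases.

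The set $G$ in the proof of Proposition~\ref{p1} has the same off-by-one issue at $n=\K(x)+1$. So this is a shared technicality, not a flaw in your idea.
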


 This elegant lemma has a short proof: let $p$ be a {\em shortest} code for
$x$. Obviously, both $x$ and $\K(x)=|p|$ are computable from $p$. Therefore,
the complexity of $(x,\K(x))$ equals $|p|= \K(x)$.

 Let the {\em universal conditional measure} be a largest within a constant
factor r.e. function $\m(/):\ \ \N^2\to\Re^+$ such that
$\sup_y\sum\m(x/y)<\infty$, and $\K(x/y)=-\lfloor\ln \m(x/y)\rfloor$.

\begin{lem}\label{l2} $\K(x,y)\ \asymp\ \K(x)+\K(y/(x,\K(x)))$.\end{lem}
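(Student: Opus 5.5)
We prove the two inequalities separately, working multiplicatively: the claim is $\m(x,y)\simeq\m(x)\,\m(y/(x,\K(x)))$.

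\emph{Easy direction, $\K(x,y)\prec\K(x)+\K(y/(x,\K(x)))$.}
Consider the function $q(x,y)=\sum_{k}\m(x,k)\,\m(y/(x,k))$. It is r.e. because both factors are r.e. It has finite total sum, since for each $(x,k)$ the inner sum over $y$ is bounded by $\sup_{y}\sum\m(\cdot/\cdot)<\infty$, and $\sum_{x,k}\m(x,k)\le 1$. By the universality of $\m$ we get $\m(x,y)\succeq q(x,y)\ge \m(x,\K(x))\,\m(y/(x,\K(x)))$. Lemma~\ref{l1} gives $\m(x,\K(x))\simeq\m(x)$, which proves this inequality.

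\emph{Hard direction, $\K(y/(x,\K(x)))\prec\K(x,y)-\K(x)$.}
Equivalently, we want $\m(x,y)/\m(x)\preceq\m(y/(x,\K(x)))$. The plan is to define a conditional r.e. function $\nu(y/(x,k))$ whose sum over $y$ is bounded for every condition $(x,k)$, and which is $\succeq \m(x,y)\,e^{k}$ when $k=\K(x)$. Universality of $\m(/)$ then finishes the argument.

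\emph{Construction of $\nu$.} Since $\sum_y\m(x,y)\preceq\m(x)$ (the marginal is an r.e. function with bounded sum), we have $\sum_y \m(x,y)\le c\,e^{-\K(x)}$ for a constant $c$. Given the condition $(x,k)$, enumerate lower approximations of $\m(x,y)e^{k}/c$ for all $y$. Keep adding the enumerated mass to $\nu(y/(x,k))$ only as long as the running total over $y$ stays $\le 1$, and stop otherwise. Each piece of mass is added only if it keeps the total within bound, so the sum over $y$ is $\le1$ for every $(x,k)$, even when $k\neq\K(x)$. This truncation is exactly where knowing $\K(x)$ in the condition is needed: the normalization $e^{k}$ has to be supplied as input, because $\m(x)$ is only lower semicomputable and cannot be used directly.

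\emph{Verification.} When $k=\K(x)$, the total $\sum_y\m(x,y)e^{k}/c$ is at most $1$, so the truncation never triggers and $\nu(y/(x,\K(x)))=\m(x,y)e^{\K(x)}/c$. Universality then gives $\m(y/(x,\K(x)))\succeq \m(x,y)/\m(x)$. Taking logarithms yields the claim up to the additive constants absorbed by $\lfloor\cdot\rfloor$.

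I expect the main obstacle to be stating the truncation so that it is r.e. and uniform in the condition. This requires enumerating rational increments and accepting each one only if the accumulated total remains $\le1$. One must also check that in the good case $k=\K(x)$ every increment is accepted.
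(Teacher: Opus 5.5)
Your proof is correct and is essentially the paper's. The hard direction is the same truncated enumeration of $e^{n}\m(x,y)$ given the condition $(x,n)$: the paper's $\ov\m(y/x,n)=\m_{k(x,n)}(y/x,n)$ stops at the last stage where the sum over $y$ is at most $1$. The easy direction is the paper's bound $\m(t)\m(y/t)\preceq\m(y,t)$ with $t=(x,\K(x))$, combined with projection and Lemma~\ref{l1}; your sum over $k$ just does the projection explicitly. Your explicit constant $c$ in $e^{k}\m(x,y)/c$ also handles cleanly the constant-factor slack that the paper glosses over with ``$n\succ\K(x)$''.
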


 Let $\m_\infty(y/x,n)=e^n\m(x,y)$. A non-decreasing by k, recursive sequence
$\m_k(y/x,n): A_k\to\Q^+$ exists, such that $\m_\infty= \sup\m_k$, where
$A_k\subset\N^3$ are finite. Let $k(x,n)=\sup\{k:\ \sum m_k(y/x,n)\le1\}$, and
$\ov\m(y/x,n)=\m_{k(x,n)}(y/x,n)$. Obviously $\forall x,n\ \sum\ov
\m(y/x,n)\le1$ (thus $\ov\m(/)\preceq\m(/)$) and $\forall x,n$ if $\sum\m(x,y)
\le e^{- n}$, then $\m_\infty(y/x,n)=\ov \m(y/x,n)$. Therefore $\forall x,n$ if
$\sum\m(x,y)\le e^{- n}$, (and thus $\m(x) \preceq e^{- n}$ or $n\succ\K(x)$)
then $\m(y/x,n)\succeq\ov\m(y/x,n)=\m_\infty (y/x,n)=e^n\m(x,y)$. Thus
$\K(y/x,\K(x))$ $\prec\K(x,y)-\K(x)$.

	It remains to prove that $\K(y/(x,\K(x)))\succ\K(x,y)-\K(x)\asymp
\K(x,y)-\K(x,\K(x))$.\hspace{2pc} This follows from $\K(x,y)\prec
\K(y,x,\K(x))$, $\K(x)\asymp\K(x,\K(x))$ and $\K(y,t)\prec\K(t)+\K(y/t)$.
\hspace{1pc} The latter inequality holds since $\m'(y,t)=\m(t)\m(y/t)$ is
obviously an r.e. measure and then $\m'(y,t)\preceq\m(y,t)$.

{Now, \ \ $\K(x,y,z)$ $\prec$ $\K(x,\K(X))+\K(y/(x,\K(x)))+ \K(z/(x,\K(x)))$
$\asymp$ $\K(x,y)+\K(x,z)-\K(x)$,}\\
 {since $\K(y,z/t)\ \prec\ \K(y/t)+\K(z/t)$. \hfil Therefore $\I((z,x):y)\
 \succ\ \I(x:y)$ and (1) follows by noting that}\\
 $\I(z:y)\ \asymp\ \I((z,x):y)$, for $x= A(z)$ since $z$ and $(z,A(z))$ are
computable from each other.

To prove (2) we need to show that: $\m(x,y)/(\m(x)\m(y))\ \succeq\
\int(\m(x,y,z)/(\m(y)\m(x,z)))d\varphi(z)$, or\\
 \mbox {$\int(\m(x,y,z)/\m(x,z))d\m(z)\preceq\ \m(x,y)/\m(x)$, since
$\m(z)\succeq\varphi(z)$.} Rewrite it:
 $\sum_z\m(z)\m(x,y,z)/\m(x,z)\preceq\ \m(x,y)/\m(x)$; or
 $\sum_z\m(z)\m(x)\m(x,y,z)/\m(x,z)\ \preceq\m(x,y)$.\\
 The latter is obvious since: $\m(z)\m(x)\ \preceq\ \m(x,z)$ and
$\sum\m(x,y,z)\ \preceq\ \m(x,y)$. \qed

\subsection {\label{1.3} Time of Computation}

 The speed of generating various r.e. sets is, as a rule, ignored in this work.
Now we touch this question briefly. Let $t_{A(p)}$ mean the running time of
$A(p)$. If $A$ is the optimal algorithm from Proposition~\ref{p1} and $R(p)$ is
$(A(p),|p|)$ then $\forall x\exists p:\ R(p)=(x,\K(x))$. Exhaustive search for
such $p$ takes exponential time, even when $R(p)$ is fast. Let us give a
fastest algorithm (storage modification machine) finding $p$.

 Let $Kt_B(x/y)=\min\{(|p|+\ln t_{B(p,y)}):\ B(p,y)= x\}$, where $p$ is a
string without termination mark: the algorithm $B$ receives, upon request, the
digits of $p$ in order until $p$ is ended; in case of further requests $B$ gets
no reply and gives no output. $Kt_B(x)= Kt_B(x/0)$. Analogously to
Proposition~\ref{p1}, an optimal $B$ exists such that $Kt_B$ is minimal within
an additive constant, and $Kt_B$ is denoted by $Kt$. There exists an algorithm
$G(n,y)$ generating the list $\{x:\ Kt(x/y)= n\}$ in time $e^n$; and within a
constant factor, $Kt$ is a minimal function with this property. (The
asymptotically minimal one is $kt(x)=\min\{Kt(a):\ x\in a\subset \N\}$). Let
$R$ be some function computable in polynomial time. A problem of finding $q\in
R^{-1}(s)$ (when it exists) is called a search problem. An NP-problem is to
find whether a (short) such $q$ exists. Wlog, $t_R$ can be made linear by
``padding" $q$ with zeros. Searching through all $q$ in the order of increasing
$Kt(q/s)$ ({\em rather than of} $|q|$) gives a fastest (within a constant
factor) algorithm for solving any search problem (see [Levin 73a]; related idea
were also expressed by L. Adleman).

Functions like $Kt$ are of a particular interest for the case of
randomized algorithms. For $f:\N\to\ov\Re{}^+$ let $c(f)$ be the
expected value of $1/(t_{B(\alpha)}+ f(B(\alpha)))$, where $\alpha$
is the random variable and, like above, $B$ is the optimal algorithm.
 Let $C(f)=\lfloor-\ln c(f)\rfloor$. For $F\subset\N$, $C(F)$
 means $C(f)$, where $f(x)=0$, if $x\in F$, else $f(x)=\infty$.
 The above algorithm $G(n,y)$, finds numbers $x\in F\subset\N$ using
random $y=\alpha$ in time $<e^n$ with probability $p>1-e^{-k}$, where
$\ln k$ equals {{$n-C(F)-2\ln(C(F))$.}}
 For any other algorithm, $p<e^{n- C(F)}$ (otherwise $C(f)$ could
be improved). Thus $C(F)$ determines the time needed to ``hit" $F$.
A function $f$, with range other then just $\{\infty,0\}$, can be
interpreted as a ``price" (e.g. time) needed to establish $x\in F= f
^{-1}(\Re^+)$.

 Everything is analogous for $C(f/y)=-\lfloor\ln \int d\alpha/(t_{B(\alpha,y)}
+ f(B(\alpha,y),y))\rfloor$. C. Bennett proposed an interesting electrical
interpretation of $C(F/y)$. Let us take the circuit of wires connected in
parallel, each of which corresponds to a particular finite $\alpha$ such that
$B(\alpha,y)\in F$. The length and the probability of a computation gives the
length and the section of the wire. Then $C(F/y)$ is the logarithm of the
circuit's resistance.

 A number of search (NP) problems are known, which are easy for probabilistic
algorithms, but seem hard for deterministic ones. E.g. constructing
``incompressible" words $x$, of high $K_c(x)$, where $c$ is a constant and
$K_c(x)$ (computable in polynomial time) is the minimal length of $p$, with
$Kt(x/p)+ Kt(p/x) <c\log |x|$. The complexity of a search problem $R$ for
probabilistic algorithms is characterized by $C(f_s/s)$, where
$f_s(q)=t_{R(q)}$, if $R(q)=s$, and $f_s(q)=\infty$ otherwise. The relationship
of this complexity with $|s|$ is a ``randomized" version of the P=NP problem.
But its relationship with the ``complexity of obtaining s" looks even more
interesting. More accurately: how does $C(\{s:\ \infty> C(f_s/s)> n\})$ grow
with $n$, polynomially or logarithmically? Short $s$ may exist for which it is
very difficult to find $q\in R^{-1}(s)$, but to find such $s$ may be even more
difficult.

\section {\label{2} Continuous Case}

	\subsection {\label{2.1} Universal Semimeasure}

	Let us extend now {\m} to the case of $\Omega$.

	\begin{prp}\label{p3} There exists a largest within a constant factor
(universal) r.e. semimeasure:

$\exists\M\forall\mu\ \exists c\forall f> 0:\ \ \ \mu(f)\ \le\ c\cdot\M(f)$.
\end{prp}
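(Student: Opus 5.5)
The plan mirrors the construction of $\m$ in \ref{0.1}: enumerate all r.e. semimeasures uniformly, then take a weighted sum. Everything is reduced to the restriction $\mu'$ of a semimeasure to $S$, the values on cylinders. By the conventions of \ref{0.3}, a semimeasure is determined by a function $\mu':S\to\Re^+$ with $\mu'(x)\ge\sum\mu'(y)$, $y\in\sigma(x)$. Since $\mu$ is the infimum of measures majorizing it, $\mu(f)$ for $f\in\SKF'$ is computed from $\mu'$ via the concave extension. Thus $\mu$ is r.e. iff $\mu'$ is an r.e. function on $S$. We will also check that domination $\mu'\le c\M'$ on $S$ gives $\mu(f)\le c\M(f)$ for all $f>0$. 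This follows because the extension from $S$ to $\SKF^+$ is monotone and positively homogeneous: every measure majorizing $c\M'$ majorizes $\mu'$. It then extends to $\SKJ$ and to $(\ov\Re{}^+)^\Omega$ by the sup/inf definitions given there.

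The first step is to make an arbitrary r.e. function into a semimeasure effectively. Let $\{\nu_i\}$ be an r.e. enumeration of all r.e. functions $S\to\ov\Re{}^+$, each given by a monotone sequence of finite rational approximations $\nu_{i,k}$. For each $i$ we build $\mu_i'$ from the stages $\nu_{i,k}$. At stage $k$ we accept the new approximation only insofar as the semimeasure inequalities $\mu(x)\ge\sum_{y\in\sigma(x)}\mu(y)$ and $\mu(\mbox{empty word})\le1$ are preserved; otherwise we freeze or trim the increment. This is the same device as $k(x,n)$ in the proof of Lemma~\ref{l2}. Concretely, we define $\mu'_{i,k}(x)$ top-down: an increase at $y$ is allowed only if its parent has enough unused mass, and the increment is capped to the available slack. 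The result is that $\mu_i'=\sup_k\mu_{i,k}'$ is an r.e. semimeasure uniformly in $i$. Moreover, if $\nu_i$ already is a semimeasure, then $\mu_i'=\nu_i$. This is the main obstacle, because trimming must not disturb genuine semimeasures. The point is that for a genuine semimeasure the accumulated approximations never violate the inequalities in the limit. Each finite stage, however, could violate them temporarily, since children may be enumerated faster than parents.

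To handle this I would first try the following. Accept the finite stage $\nu_{i,k}$ on its finite support only when it is itself a semimeasure; otherwise keep the previous accepted stage. A genuine semimeasure need not have such stages. Instead, replace $\nu_{i,k}$ by the largest semimeasure below it on the finite tree. This is computed bottom-up as $\tilde\nu(x)=\min(\nu_{i,k}(x),\ldots)$, or better top-down by proportionally scaling the children so that their sum does not exceed the parent. These truncations are monotone in $k$, since $\nu_{i,k}$ increases. For a genuine semimeasure they converge to $\nu_i$: each truncated value differs from $\nu_{i,k}$ only by a loss that vanishes as the parent's approximation approaches its limit, which is at least the sum of the children's limits. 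Some care is needed at the boundary where the parent and the children sums are equal in the limit. If this exact convergence fails, it suffices to lose a constant factor: use $\frac12\nu_i$ in the trimming, so that strict slack is always eventually available.

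Finally set $\M'(x)=\sum_i \mu_i'(x)/2i^2$. It is r.e. as a sum of uniformly r.e. nonnegative terms. The semimeasure inequalities are preserved under positive combinations. The total mass is at most $\sum 1/2i^2<1$. For any r.e. semimeasure $\mu$, pick $i$ with $\nu_i=\mu'$; then $\mu'\le 2i^2\M'$. By the first paragraph, $\mu(f)\le 2i^2\M(f)$ for all $f>0$, so $c=2i^2$.
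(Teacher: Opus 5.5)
\paragraph{The gap: the reduction to cylinder values.} In this paper a semimeasure is an element of ${\SKH}$, a positive, homogeneous, concave functional on ${\SKF}$. The remark in \ref{0.3} that $\mu$ is determined by its restriction $\mu'$ to $S$ is made only for measures $\mu\in{\SKM}$, and it is false in ${\SKH}$.

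Take the binary case for simplicity. Let $\nu_0,\nu_1$ be the uniform probability measures on the cylinders $[0],[1]$, and set $\mu(g)=\min(\nu_0(g),\,\epsilon\nu_0(g)+K\nu_1(g))$. This is a recursive semimeasure. Its cylinder values are mass $1-\epsilon$ lost at the root plus $\epsilon\nu_0$. So the minimal extension you use, $\tilde\mu(g)=\inf\{\nu(g):\ \nu\ge\mu'\mbox{ on }S\}=\epsilon\nu_0(g)+(1-\epsilon)\min g$, has the same cylinder values as $\mu$. Yet at $g=1_{[0]}+K^{-1}1_{[1]}$ it gives $\epsilon+(1-\epsilon)/K$, while $\mu(g)=1$. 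Hence ``every measure majorizing $c\M'$ majorizes $\mu'$'' yields only $\tilde\mu\le c\M$, not $\mu\le c\M$.

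Nor does the conclusion hold for your $\M$. Diagonalize against the enumeration of $\M'$ with $\mu=\sum_k2^{-k}\min_{U_k}g$. Here $U_k$ starts as a union of $2^{n_k}$ disjoint cylinders (atoms), chosen so that every cylinder inside $U_k$ lies within one atom. Then $\M(1_U)$ is just the sum of $\M'$ over the atoms of $U$. Each time $\M(1_{U_k})$ is seen to reach $2^{-k}/k$, replace $U_k$ by the half of its atoms carrying less $\M'$-mass. Each halving strands at least $2^{-k}/(2k)$ of $\M'$-mass in discarded atoms. So for $n_k>k2^{k+1}$ some halving never happens, and then $\mu(1_{U_k})\ge2^{-k}$ while $\M(1_{U_k})\le2^{-k}/k$. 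This $\mu$ is an increasing limit of elementary semimeasures, hence r.e. in ${\SKH}$, yet $\mu\not\preceq\M$.

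\paragraph{The trimming step.} This also fails as written, even for cylinder semimeasures. There is no largest semimeasure below a given function, since the pointwise maximum of two semimeasures need not be one. Proportional rescaling of children is not monotone in $k$, because raising one child lowers its siblings. Passing to $\frac12\nu_i$ creates no slack in the local inequalities $\mu(x)\ge\sum\mu(y)$. The standard repair goes upward: replace the stage-$k$ data by the smallest semimeasure above it. For example, take $\psi_k(x)=\max_D\sum_{y\in D}\nu_{i,k}(y)$ over finite prefix-free sets $D$ of extensions of $x$. This is monotone in $k$, stays below every genuine semimeasure majorizing $\nu_{i,k}$, and is frozen once its total mass would exceed $1$.

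\paragraph{What the paper does.} The paper's proof carries out this idea in ${\SKH}$ itself. An elementary semimeasure is the \emph{minimal} one satisfying finitely many inequalities $\mu(f)>r$ with arbitrary $f\in{\SKF}'$, not just cylinder indicators, so adding constraints is automatically monotone. R.e. semimeasures are suprema of r.e. directed families $t_k$ of these. The universal element of Note~\ref{n2}, $\sup\bigcup_k a_k(0)$ with $a(\mu,\varphi)=(\mu+\varphi)/2$, unwinds to $\sum_k2^{-k-1}\sup t_k$, which is your weighted sum. So your global plan is the paper's. What must change is that enumeration and domination are carried out over constraints on all of ${\SKF}'$ rather than over cylinder values.
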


	\paragraph{Proof:} Let $(H,<)$ be an ordered set with a monotone
operation $a:\ H^2\to H$ (``averaging") and a family $Y:\N\to H$, containing
the smallest element 0, closed under $a$ and such that ``$<, a$" are r.e. on
the indexes. Let the supremums of all directed sets $Y(A)$ with r.e. $A$ exist
and are called the r.e. elements. Then $(H,Y,<,a$) will be called a {\em
numbered convex body.}

\begin{nt}\label{n2} Any numbered convex body $(H,Y,<,a$) has a universal r.e.
element, i.e. one largest in any weaker then ``$<$" order ``$\preceq$" such
that $x\preceq a(x,y)\succeq y$.\end{nt}

 This element is the supremum of $\cup a_k(0)$, where $a_0(x)=\{x\}$,
$a_{k+1}(x)=\cup\{a_k(a(r,x)):\ r\in t_k\}$ and $t_k$ is the k-th enumerable
directed subfamily of $Y$.

 Proposition~\ref{p3} is a special case, where $H=\SKH$, $a(\mu,\varphi)=
(\mu+\varphi)/2$ and $Y$ be the family of ``elementary" semimeasures in a
natural numeration. An elementary semimeasure is the minimal one satisfying a
finite set of inequalities $\mu(f)>r$, where $f\in\SKF', r\in\Q$. \qed

 This {\M} is the central technical concept of the work. Being largest, it
determines the broadest class of sets $A\subset\Omega$ of positive probability.
In statistics one tries, given $\alpha$, to get a probability distribution
$\mu$ with respect to which $\alpha$ may be reasonably considered ``random".
I.e. $\mu(A)>>0$ for ``standard" properties $A$ satisfied by $\alpha$. But this
assertion is the weakest with $\mu =\M$. So, {\M} can be taken {\em a priori,}
before studying what the properties of $\alpha$ really are. In other words. If
$\alpha$ occurs randomly with probability distribution $\mu$ then it has
properties $A$, for which $\mu(\neg A)=0$. The class of such properties A is
narrowest for $\mu = \M$, and they can be predicted {\em a priori} before
finding out what $\mu$ really is. This justifies using {\M} as {\em a priori}
probability.

 The distribution {\M} is suitable for other applications, as {\em a priori}
probability (e.g. for inductive inference in accordance with the ideas of
[Solomonoff]), but this questions are not considered here. Let us note that
$\mu(\alpha_n)\ \simeq\ \M(\alpha_n)$ for any r.e. semimeasure $\mu$ and
$\mu$-almost all $\alpha$. This property of $\alpha$ can be used (see
Proposition~\ref{p5}) as a definition of the concept of a $\mu$-random
sequence.

\subsection {\label{2.2} Randomness and Information; Conservation Laws}

 Now the second half of the Conventions is very essential. Let us extend to
$\Omega$ the concept of randomness tests considered in 1.1. For any set $A$ of
$\mu$-measure 0 there is a lower semicontinuous function $t\in\SKJ^+$, with
average value $\mu(t)\le1$ and $t(A)=\{\infty\}$. Only for recursive $\mu$,
r.e. tests $t$ are natural to consider. For a general case let $t^\rho
\in\SKJ^+$ be the function whose subgraph is enumerated by $\rho\in\Omega$. In
 Definition~\ref{d2} we will average this over all $\rho$ generated
``arbitrarily" i.e. randomly with universal distribution \M. We eliminate the
case of $\mu(t ^\rho)>1$ defining $t_\mu^\rho=t^\rho$, if $\mu(t^\rho)\le1$,
and $t_\mu^\rho=0$ otherwise. To deal with random transformations which may
turn an individual sequence $\alpha$ into a measure $\varphi$, we will extend
tests $t_\mu^\rho(\alpha)$ to distributions as $\varphi(t_\mu^\rho)$. Let
$(^\varphi _\mu)\in\ov\Re{}^\Omega$ maps $\rho$ to $\varphi(t_\mu^\rho)$.

 \begin{dfn}\label{d2} $\D(\varphi/\mu)=\lfloor\ln\M(^\varphi_\mu)\rfloor$ is
the Deficiency of Randomness of $\varphi$ with respect to $\mu$.\end{dfn}

 Most important is the special case $\D(\alpha/\mu)$ when $\varphi$ is
concentrated in a single point $\alpha$. It is a generalization of $\d(x/\mu)$
from Note~\ref{n1}, since its ``average" is $\D(\mu/\mu)\le0$. For motivations
of the notion of Randomness one may look in sections~\ref{3} and \ref{1.1}. Now
the central fact comes:

\begin{thr} [Randomness conservation] \label{t1} $\D(A(\mu_1)/ A(\mu_2))\prec
\D(\mu_1/\mu_2)$, where $A\in\SKS$ is r.e., $\mu_1,\mu_2\in\SKH$.\end{thr}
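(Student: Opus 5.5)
The plan is to unfold Definition~\ref{d2} on both sides and compare the two integrands under $\M$. By definition, $\exp\D(A(\mu_1)/A(\mu_2))\simeq\M(\rho\mapsto A(\mu_1)(t^\rho_{A(\mu_2)}))$. Since $A(\mu)(f)=\mu(A(f))$, the integrand equals $\mu_1(A(t^\rho_{A(\mu_2)}))$, extended to $\SKJ$ by lower semicontinuity as in the Conventions. The key observation is that if $t^\rho_{A(\mu_2)}$ is a test for $A(\mu_2)$, then $g=A(t^\rho_{A(\mu_2)})$ is a test for $\mu_2$. Indeed, $g\in\SKJ^+$ because $A$ maps $\SKF$ into $\SKJ$ and respects monotone suprema, and $\mu_2(g)=A(\mu_2)(t^\rho_{A(\mu_2)})\le1$. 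Moreover, $g$ is enumerated uniformly from $\rho$ and from an enumeration of $A$, because $A$ is r.e. Hence there is a recursive (continuous) map $\Phi:\Omega\to\Omega$, $\rho\mapsto\rho'$, such that $t^{\rho'}=A(t^\rho)$.

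The obstacle is the truncation $t^\rho_\mu$: it equals $t^\rho$ only when $\mu(t^\rho)\le1$, and this condition is not enumerable. It suffices, however, to have $A(t^\rho_{A(\mu_2)})\le t^{\Phi(\rho)}_{\mu_2}$ pointwise. When $A(\mu_2)(t^\rho)\le1$, we get $\mu_2(t^{\Phi(\rho)})=A(\mu_2)(t^\rho)\le1$, so no truncation occurs on the right and equality holds. When $A(\mu_2)(t^\rho)>1$, the left side is $A(0)=0$. If $\Phi$ is only a partial enumeration, so that $t^{\Phi(\rho)}\le A(t^\rho)$, the case $\mu_2(t^{\Phi(\rho)})\le1$ still goes through by monotonicity. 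Thus
$$\bigl(^{A(\mu_1)}_{A(\mu_2)}\bigr)(\rho)\ \le\ \bigl(^{\mu_1}_{\mu_2}\bigr)(\Phi(\rho)).$$

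It then remains to show $\M(f\circ\Phi)\preceq\M(f)$ for nonnegative $f$ and recursive $\Phi$. The functional $f\mapsto\M(f\circ\Phi)$ is the image semimeasure $\Phi(\M)$. It is an r.e. semimeasure, since $\Phi$ is an r.e. deterministic element of $\SKS$ and composition preserves enumerability, positivity and concavity. By Proposition~\ref{p3}, $\Phi(\M)\le c\,\M$ on positive functions, and the inequality extends from $\SKF^+$ to $\SKJ^+$ and to $(\ov\Re{}^+)^\Omega$ through the sup/inf extensions in the Conventions. Combining this with the pointwise bound gives $\M(^{A(\mu_1)}_{A(\mu_2)})\le c\,\M(^{\mu_1}_{\mu_2})$, and taking logarithms yields the theorem.

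I expect the main difficulty to be the technical uniformity of $\Phi$. One must check that $\rho\mapsto$ (an enumeration of the subgraph of $A(t^\rho)$) can be realized as a total continuous map on $\Omega$. Infinite values and partiality of $A$ must be handled by padding with $\infty$ or "no new information" symbols, which the $\ov\N$-valued coding of $\Omega$ permits. The extension of the semimeasure inequality to non-continuous integrands also needs care. The plan is to represent each $\bigl(^{\mu_1}_{\mu_2}\bigr)$ as a supremum or infimum of continuous functions from the appropriate side and to pass to the limit using the definitions of $\mu(f)$ on $-\SKJ$ and on $(\ov\Re{}^+)^\Omega$.
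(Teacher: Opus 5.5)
Your argument is correct and is essentially the paper's proof. Your $\Phi$ is the paper's $A'$, a sequence enumerating the subgraph of $A(t^\rho)$. Your truncation check gives $(^{A\mu_1}_{A\mu_2})=(^{\mu_1}_{\mu_2})\circ A'$; the paper states this as an equality, and you only need $\le$. Universality of $\M$ applied to the r.e. image semimeasure $A'(\M)$ then finishes it, and your extra care in extending $A'(\M)\preceq\M$ to non-continuous integrands fills in what the paper leaves implicit.

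One small correction: your aside about a merely partial enumeration with $t^{\Phi(\rho)}\le A(t^\rho)$ does not work. It gives the pointwise inequality in the wrong direction, so $\Phi$ must enumerate the whole subgraph, which your main construction does.
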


 \paragraph{Proof:} Let $A'(\rho)$ be a sequence, enumerating the subgraph of
$A(t^\rho)\in\SKJ^+$.

\noindent Then
$\exp\D(A\varphi/A\mu)=\M(^{A\varphi}_{A\mu})=\M(A'(^\varphi_\mu)) =(\M\circ
A')(^\varphi_\mu)\preceq \M(^\varphi_\mu)$, since {\M} is universal. \qed

 So, $\D(\alpha/\mu)$ is invariant (within a constant) with respect to all r.e.
operators preserving $\mu$.

 \begin{dfn}\label{d3} $\I(\alpha:\beta)= \D((\alpha,\beta)/ \M\otimes \M)$
 is called the {\em amount of information in $\alpha$ about $\beta$} or the
{\em deficiency of their independence.} Here $\alpha,\beta$ may be sequences or
semimeasures.\end{dfn}

 \begin{cor} (A generalization of Proposition~\ref{p2}) Suppose
$\alpha,\beta\in\Omega$. Then $\I(A(\alpha):\beta)\prec
\I(\alpha:\beta)$.\end{cor}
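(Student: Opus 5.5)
The Corollary will follow from Theorem~\ref{t1}, applied to the operator $B=A\otimes\mathrm{id}$ acting on pairs. The role of $\mu_2$ is played by $\M\otimes\M$, and the role of $\mu_1$ by the point $(\alpha,\beta)$.

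First I define $B\in\SKS$ by $B(\alpha,\beta)=(A(\alpha),\beta)$. In operator form, $(Bf)(\alpha,\beta)$ is the average of $f(\cdot,\beta)$ over the image $A(\alpha)$. With the paper's pairing convention $(\alpha,\beta)(i)=(\alpha(i),\beta(i))$, this $B$ is r.e. whenever $A$ is, and it is positive, uniform and concave because $A$ is. Theorem~\ref{t1} then gives
$$\D(B(\alpha,\beta)/B(\M\otimes\M))\prec\D((\alpha,\beta)/\M\otimes\M)=\I(\alpha:\beta).$$
The left side is $\D((A\alpha,\beta)/(A\M)\otimes\M)$. Here $A\alpha$ may be a semimeasure when $A$ is random, which Definition~\ref{d3} permits.

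Second, I compare $(A\M)\otimes\M$ with $\M\otimes\M$. Since $A\M$ is an r.e. semimeasure, universality of $\M$ (Proposition~\ref{p3}) gives $A\M\preceq\M$, and hence $(A\M)\otimes\M\preceq\M\otimes\M$. So I need a monotonicity lemma: if $\mu\le c\,\nu$ with $\mu,\nu$ r.e., then $\D(\varphi/\nu)\prec\D(\varphi/\mu)$. Enlarging the reference measure should only lower deficiency.

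To prove the lemma, I would take any test $t^\rho_\nu$. It satisfies $\nu(t^\rho)\le1$, hence $\mu(t^\rho/c)\le1$. So $t^\rho/c$ is a $\mu$-test, but it is enumerated by a transformed index $\rho'$, obtained by an r.e. map $\rho\mapsto\rho'$ that enumerates the subgraph of $t^\rho/c$. This gives $\varphi(t^\rho_\nu)\le c\,\varphi(t^{\rho'}_\mu)$. Averaging over $\M$ and using that the image of $\M$ under this r.e. map is $\preceq\M$ (the same universality step used in the proof of Theorem~\ref{t1}), I get $\M(^\varphi_\nu)\preceq\M(^\varphi_\mu)$.

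Combining the two steps gives $\I(A\alpha:\beta)=\D((A\alpha,\beta)/\M\otimes\M)\prec\D((A\alpha,\beta)/(A\M)\otimes\M)\prec\I(\alpha:\beta)$.

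The main obstacle is the monotonicity lemma. The truncation $t_\mu^\rho=0$ when $\mu(t^\rho)>1$ is not monotone in the measure, and whether $\nu(t^\rho)\le1$ holds is not decidable from $\rho$. The map $\rho\mapsto\rho'$ must therefore be built so that it enumerates $t^\rho/c$ only while the bound remains consistent. I would first try to redefine $\rho'$ to enumerate the largest part of $t^\rho$ whose $\nu$-integral stays $\le1$ at each finite stage, using that $\nu=\M\otimes\M$ is r.e. from below. If that fails, I would instead absorb the constant $c$ into the test normalization directly, which is enough here because only an additive constant in $\D$ is claimed.
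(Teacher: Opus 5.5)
Your proposal is correct and is essentially the paper's argument: the paper's one-line proof is Theorem~\ref{t1} applied to the operator acting on the first coordinate, followed by $A(\M)\preceq\M$ and monotonicity of $\D$ in the reference semimeasure, and you spell out exactly these steps. The obstacle you flag in the monotonicity lemma does not arise, so your fallback constructions are unnecessary. With the plain rescaling $\rho\mapsto\rho'$ enumerating $t^\rho/c$, if $\nu(t^\rho)\le1$ then $\mu(t^{\rho'})\le1$, so no truncation occurs on the right; if $\nu(t^\rho)>1$ then $t^\rho_\nu=0$. Hence $t^\rho_\nu\le c\,t^{\rho'}_\mu$ holds for every $\rho$ without deciding anything, and your universality step then finishes the proof.
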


 The proof follows by noting that $A(\M)\preceq \M$ and {\D} is monotone.

 This justifies the Independence Postulate, from Introduction, since one can
usually ``explain" known physical processes reducing them to simpler ones in
combination with recursive and random transformations (considered in the above
Corollary). The Universe, on the whole, is also assumed to evolve according to
the (recursive) equations of physics from a state of random movement of hot
plasma (additional randomness appears in the observation processes). Of course,
not being a mathematical assertion (the physical world is not chosen
mathematically), the Independence Postulate (like, e.g., Church's thesis)
cannot be proven.

\subsection{\label{2.3} Complete Sequences}

 Any r.e. measure is recursive, i.e., computable with any accuracy, in contrast
to semimeasure $\M$ for which any r.e. lower bound of $(\max f(x))/\M(f)$ is
bounded by a constant. But it may be known about some $\alpha\in\Omega $ that
on its segments {\M} agrees with some r.e. {\em measure} $\mu$ within a
constant factor. Then, computing $\mu$ gives $\M(\alpha_n)$ within a constant.
Such $\alpha$ is called {\em complete,} denoted $\alpha\in C$ or
$C(\alpha)\iff\ \exists\mu \sup(\M(\alpha_n)/\mu (\alpha_n))<\infty$. Its
segments contain all information needed to compute their complexity
$-\lfloor\ln\M(\alpha_n) \rfloor$. According to Proposition~\ref{p4}, $C$ is
very wide. By virtue of its item 2, any sequence $\alpha$ satisfying the
Independence Postulate has a completion $(\alpha,\tau)\in C$, satisfying this
Postulate as well.

\begin{prp}\label{p4}\begin{enumerate}
 \item For any recursive measure $\mu$ and total recursive operator $A$,
$\mu(C)=\mu(\Omega)$ and $A(C)\subset C$.
 \item Let $\beta$ be a sequence to which a universal r.e. set is Turing
reducible and $\alpha$ be independent of $\beta$. Then $\tau\in \N^\N $ exists
such that $(\alpha,\tau)$ is complete and independent of $\beta$.
 \end{enumerate}\end{prp}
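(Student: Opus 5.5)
\emph{Item 1.} Fix a recursive measure $\mu$. The ratio $\M(x)/\mu(x)$ along the segments of $\alpha$ is a nonnegative $\mu$-supermartingale, because $\M(x)\ge\sum\M(y)$ over $y\in\sigma(x)$. I would prove directly the maximal inequality $\mu\{\alpha:\ \sup_n\M(\alpha_n)/\mu(\alpha_n)>c\}\le\M(\Omega)/c$. To do this, let $X_c$ be the set of \emph{minimal} $x$ with $\M(x)>c\,\mu(x)$. It is prefix-free, so the union of its cylinders has $\mu$-measure $\sum_{x\in X_c}\mu(x)<c^{-1}\sum_{x\in X_c}\M(x)\le c^{-1}\M(\Omega)$. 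Letting $c\to\infty$ shows that $\mu$-almost every $\alpha$ satisfies $\sup\M(\alpha_n)/\mu(\alpha_n)<\infty$ with $\mu$ itself as the witness, so $\mu(C)=\mu(\Omega)$.

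For $A(C)\subset C$, let $\alpha$ be complete with witness $\nu$ and take $A(\nu)$ as the candidate witness for $A(\alpha)$. Since $A$ is total recursive, for each $n$ there is a computable $k(n)$ such that the cylinder of $\alpha_{k(n)}$ maps into the cylinder of $y=A(\alpha)_n$. The needed upper bound on $\M(y)$ comes from a pull-back semimeasure
$$\lambda(f)=\M\big(E_\nu(f\mid A)\big),$$
that is, $\M$'s mass on $y$ redistributed among the preimages in proportion to $\nu$. Because $\nu$ is recursive, $\lambda$ is an r.e. semimeasure, so $\lambda\preceq\M$. This gives
$$\M(y)\,\nu(\alpha_{k(n)})/A\nu(y)\ \le\ \lambda(\alpha_{k(n)})\ \preceq\ \M(\alpha_{k(n)})\ \preceq\ \nu(\alpha_{k(n)}),$$
whence $\M(y)\preceq A\nu(y)$. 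The technical care here is to define $\lambda$ as a positive concave functional, in the language of \ref{0.3}, rather than on strings, so that zero-measure preimages cause no trouble.

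\emph{Item 2.} The plan is to build a random r.e. operator $R\in\SKS$ with $R(\alpha)=$ the distribution of $(\alpha,\tau)$, where $\tau$ is drawn from a recursive conditional law given $\alpha$. Two properties are needed:
\begin{enumerate}
\item Independence: by the Corollary of Theorem~\ref{t1}, in the averaged linear form analogous to Proposition~\ref{p2}(2), $\int\exp\I((\alpha,\tau):\beta)\,dR(\alpha)\preceq\exp\I(\alpha:\beta)$. By Markov's inequality, the set of $\tau$ with $\I((\alpha,\tau):\beta)\prec\I(\alpha:\beta)+O(1)$ then has $R(\alpha)$-probability $\ge 1/2$ for a suitable constant.
\item Completeness: $R(\alpha)(\{\tau:\ (\alpha,\tau)\in C\})>1/2$.
\end{enumerate}
The two sets must then intersect, which yields the required $\tau$. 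Note that $\beta$ computes $0'$, so $\beta$ can compute $\M$ exactly; the role of $\beta$ is only that "independent of $\beta$" is the strong hypothesis on $\alpha$ that makes the averaging argument meaningful.

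For the choice of $\tau$, the plan is that $\tau(n)$ records a stage $s$ of the enumeration $\M_s$, chosen at random with probability proportional to the increments $\M_s(\alpha_n)-\M_{s-1}(\alpha_n)$. Then $\tau$ lets us compute $\M_{\tau(n)}(\alpha_n)$, and a recursive measure $\mu$ on pairs can be defined from these approximations. The intended mechanism is to make the joint law of $(\alpha_n,\tau_n)$ itself essentially a recursive measure dominating $\M$ along the pair, and then invoke item~1 for that measure.

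The main obstacle is getting completeness with a \emph{constant} factor for all $n$ simultaneously. Independent per-$n$ choices lose a factor growing with $n$. What I would try first is to choose the stages coherently, by a single random threshold $u$ uniform on $(0,1)$: let $\tau(n)$ be the first stage with $\M_s(\alpha_n)\ge u\,\M(\alpha_n)$, coded relative to $\tau(n-1)$. Then one random real governs all levels, and the loss is the single factor $1/u$, which is finite almost surely.
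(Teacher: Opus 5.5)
\textbf{Item 1, measure part.} Your proof that $\mu(C)=\mu(\Omega)$ is fine. The maximal inequality for the supermartingale $\M(\alpha_n)/\mu(\alpha_n)$ is what underlies the paper's remark that $r\ln t$ ($r<1$) is a $\mu$-test, and $\mu$ itself serves as the witness.

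\textbf{Item 1, $A(C)\subset C$: the pull-back $\lambda$ is not an r.e. semimeasure.} The functional $\lambda(f)=\M(E_\nu(f\mid A))$ is not well defined. $E_\nu(f\mid A)$ is only an $A\nu$-a.e. class, and $\M$ need not be absolutely continuous with respect to $A\nu$ (it gives positive weight to every recursive sequence). In general there is no lower semicontinuous r.e. version, so nothing makes $\lambda$ an r.e. semimeasure.

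Even with some version, the first inequality of your chain, $\M(y)\nu(\alpha_{k(n)})/A\nu(y)\le\lambda(\alpha_{k(n)})$, needs the $\M$-average over $[y]$ of the conditional probability of $[\alpha_{k(n)}]$ to be at least its $A\nu$-average. Nothing prevents $\M$ from sitting where that probability is small.

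Reading $\lambda$ as ``$\M(y)$ redistributed in proportion to $\nu$'' does make the inequality an equality, but only at a fixed level $n$. Different levels then disagree on the same string. Suppose $k(n)=k(n+1)$ and $[x]$ maps into a child $y'$ of $y$ with $A\nu(y')=A\nu(y)/2$ and $\M(y')>\M(y)/2$; the two levels then assign $x$ different masses. You give no argument that one r.e. semimeasure meets all these lower bounds with a single constant.

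The repair is to put the maximal function into the density. Let $t(\alpha)=\sup_n\M(A(\alpha)_n)/A\nu(A(\alpha)_n)$ and $r<1$. Your maximal inequality, applied to $A\nu$, gives $\nu(t^r)<\infty$, so $\lambda(f)=\nu(t^rf)$ is an r.e. semimeasure and $\lambda\preceq\M$. Since $\{t>c\}$ is open, some $[\alpha_k]$ lies inside it whenever $t(\alpha)>c$. Then $c^r\nu(\alpha_k)\le\lambda(\alpha_k)\preceq\M(\alpha_k)\preceq\nu(\alpha_k)$, which bounds $t(\alpha)$. This is the paper's argument: $\delta_{\mu'}(A(\alpha))$ is a $\mu$-test, then Proposition~\ref{p5}. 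It gives finiteness only up to an exponent $1/r$, not the constant-factor bound your chain claims.

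\textbf{Item 2: not proved.} You acknowledge that completeness is open, and the independence half fails as set up.
\begin{itemize}
\item Choosing a stage ``proportional to the increments'' normalizes by $\M(\alpha_n)$. ``The first $s$ with $\M_s(\alpha_n)\ge u\M(\alpha_n)$'' needs the limit value $\M(\alpha_n)$. Both are only $0'$-computable, so $R$ is not an r.e. element of $\SKS$ and the Corollary of Theorem~\ref{t1} does not apply.
\item Making $R$ partial to restore enumerability leaves total mass far below $1/2$ in general. That breaks both of your ``probability $\ge 1/2$'' steps.
\item The plan ignores what makes completion possible at all. With the pairing in $\ov\N$, a segment $(\alpha,\tau)_n$ shows $\alpha(i)$ only once the code of $(\alpha(i),\tau(i))$ is $\le n$. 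So the projection $(\alpha,\tau)\mapsto\alpha$ is partial; if it were total, item 1 would already force $\alpha\in C$.
\end{itemize}

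\textbf{The missing idea} is to avoid conditioning on a fixed $\alpha$ altogether.
\begin{itemize}
\item By [Zvonkin, Levin], $\M(x)=\mu\{\omega: A(\omega)\supset x\}$ for a recursive measure $\mu$ and a partial recursive $A$.
\item Adjoining the sequence $\tau$ of computation times makes $A'(\omega)=(A(\omega),\tau)$ total. So $\mu'=A'(\mu)$ is a recursive measure that projects to $\M$. Your stage idea is close in spirit to this $\tau$.
\item By item 1, $\mu'$-almost all pairs are complete. By Note~\ref{n3}, $\mu'$-almost all pairs are independent of $\beta$.
\item Hence the set of $\alpha$ with no good $\tau$ is $\ov\M$-null. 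Lemma~\ref{l3} then yields a single $\beta'$ on which all such $\alpha$ depend.
\item Because $\beta$ computes the universal r.e. set, $\beta'$ is computable from $\beta$, so by conservation these $\alpha$ depend on $\beta$.
\end{itemize}
This last step is where the hypothesis on $\beta$ is used. Your plan never uses it, which signals that the passage from ``$\M$-almost every $\alpha$'' to ``every $\alpha$ independent of $\beta$'' is missing.
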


 This $\alpha$ comes from $(\alpha,\tau)$ by a partial recursive (but not
total) projection operator. So incompletable partial operators can lead out of
$C$. In Chapter \ref{4} an axiom is used which means intuitively that every
``physical" sequence is a projection of a complete one.

\paragraph{Proof:} Let $r<1$ and $\delta_\mu(\alpha)=r\,\ln\sup
(\M(\alpha_n)/\mu(\alpha_n))<\infty$ and $\mu'(x)=\mu\{ \alpha:A(\alpha)\supset
x\}$. Then $\mu$'\ is also an r.e. measure, and $\delta_{\mu'}(A(\alpha))$ is a
$\mu$-test. Then by virtue of Proposition~\ref{p5},
$\delta_{\mu'}(A(\alpha))<\infty$ and $A(\alpha)\in C$. Obviously $\mu(C)= 1$
because $\delta_\mu(\alpha)$ is a randomness test.

	It remains to prove (2). In section~3.2 of [Zvonkin, Levin] it is shown
that {\M} (like any other r.e. semimeasure) is generated by a partial recursive
operator $A$ from a recursive measure $\mu:\ \M(x)=\mu \{\alpha:
A(\alpha)\supset x\}$. Let $A'(\alpha)=(A(\alpha),\tau_{A(\alpha)})$, where
$\tau_{A(\alpha)}$ is the sequence of values of the time of $A(\alpha)$ terms
calculation. The operator A'\ is total and, hence, $\mu'(x)=\mu\{
\alpha:A'(\alpha)\supset x\}$\ is a recursive measure. {\M} is generated from
$\mu$'\ by the projector $(\alpha,\tau)\to\alpha$. And $\mu'\{(\alpha,\tau) :\
\I((\alpha,\tau):\beta)= \infty\}\ =0$, by the Note~\ref{n3}. Also
$\mu'(\Omega-C)= 0$. Therefore, $\M\{\alpha:\ \forall \tau\in\Omega\
(((\alpha,\tau)\mathrel{\not\in} C)\vee\I((\alpha,\tau):\beta)= \infty)\} =0$.
By the Note~\ref{n3}, for any set $A$ such that $\ov\M(A)=0$, a sequence
$\beta$'\ exists on which all elements of $A$ depend. The same is true for any
sequence to which $\beta'$ is reducible. Using reducibility to $\beta$ of the
universal r.e. set, one can routinely check that the necessary $\beta$'\ is
computable with respect to $\beta$. Thus $\beta$ depend on all sequences
$\alpha$ not completable to a complete, independent of $\beta$ sequence
$(\alpha,\tau)$. \qed

\newpage\centerline{\bf\large II. Applications}\vspace{1pc}

	\section{\label{3} Foundations of Probability Theory}
 \subsection{\label{3.1} Foundational Difficulties (A historical digression)}

 Hilbert's sixth problem suggests ``{\bf To treat \rm in the same manner} (as
geometry), {\bf by means of axioms, those physical sciences, in which
mathematics plays an important part; in the first rank are the Theory of
Probabilities \rm and mechanics."} (see [Hilbert]). The Probability Theory is
considered there as a physical science. But its physical nature was almost
forgotten after the analytical explosion followed A.N. Kolmogorov's book
[Kolmogorov 33], where Probability Theory obtained a powerful mathematical
foundation. And some difficulties in relation of this mathematical apparatus to
the probabilistic natural phenomena were left aside. Kolmogorov noted this in
the foreword to the second Russian edition of the book, where he refers to
[Kolmogorov 65], [Zvonkin, Levin] for his new approach. The well-known previous
attempts to overcome these difficulties [vMises, Church] turned out to be
imperfect [Ville].

 The difficulties lie in the gap between intuitive probabilistic ideas and
those methods which are justifiable theoretically. Probabilistic considerations
start with an assumption that a sequence $x$ is generated randomly with
probability distribution $\mu$. This $\mu$ is discovered or hypothesized e.g.
by analogy with other processes and statistical data about them, considerations
of symmetry, etc. Then, according to the naive ideas, those properties of $x$
are indicated as probabilistic laws whose $\mu$-probability is 1
(approximately, in the finite case). E.g. the Law of Large Numbers plays an
important role when $x = x_1,...,x_n$ are independent and identically
distributed trials (i.e., $\mu(x_1,...,x_n)=\mu'(x_1) \mu'(x_2)...\mu'(x_n)$).
For each set $B$ it predicts $x\in LBN$, which means that the frequency of $i:\
x_i\in B$ is close to the probability $\mu'(B)$ (and $\mu(LBN)\approx 1$). In
general, $x$ is predicted to have the properties whose probabilities are close
to 1.

 The problem is that {\em jointly the properties of probability 1 have
probability 0 !} One cannot guarantee all of them simultaneously, but should
choose one or a few. Thus, the outcome should not be expected to withstand
statistical tests chosen afterwards. So classical theory provides no rigorous
basis to doubt the honesty of the lottery director whose son won the main prize
in ten consecutive years, if this is discovered ``post factum"! One cannot
criticize an election when the share of votes for the ruling party in a series
of consecutive years formed a sequence $0.99k_i$, even if $k_i$ turn out to be
the decimal digits of $\pi$ ! Of course, one can select few ``standard laws"
and presume them always to be chosen. {\em However, the classical Probability
Theory has no ideas for selection of such standard laws.} Besides, this would
not justify applying Probability Theory to events preceding such a
standardization (e.g., in cosmology, history, geology, etc.).

 The Kolmogorov's idea for solving this paradox is to select those properties
of probability close to 1, which are ``simply expressible". The objects not
satisfying such a property form a simple set of small measure and
correspondingly small cardinality. Then any such object is simple itself, being
specifiable by its number (smaller then the set's cardinality) with the simple
set's description. This allows substitution of many simple properties by a
single one: ``not to be a simple object". Kolmogorov's Algorithmic Information
Theory was a surprising discovery, provided a rigorous basis for the obscure
notion of simplicity. In the infinite case the corresponding property is ``to
be random with respect to distribution $\mu$". Then only this property is
expected from the objects occurring randomly with distribution $\mu$. This
property is of $\mu$-measure 1 and implies all other ``good" properties of
$\mu$-measure 1. Attempts to introduce such universal concepts of Randomness
were undertaken previously in [vMises, Church] for Bernoulli distributions.
However, it was found [Ville] that even such standard properties as the Law of
Iterated Logarithm do not follow from their concept of being a ``Collective".

\subsection{\label{3.2} The Laws of Randomness and Independence}

 Any set $A$ is of $\mu$-measure 0 iff there is a lower semicontinuous function
$t\in\SKJ^+$, with average value $\mu(t)\le 1$ and $t(A)= \{\infty\}$. For a
typical result $\alpha$ of a $\mu$-distributed random process, $t(\alpha)$
should not exceed much the average and the probability of large deviations is
small. This justifies the following modification of a definition from
[Martin-Lof].

 \begin{dfn}\label{d4} A randomness test with respect to a recursive measure
$\mu$ (or a $\mu$-test) is a function $\lfloor\ln t(\alpha)\rfloor$, where
$t\in \SKJ^+$ is r.e. and $\mu(t)\le 1$.\end{dfn}

  Definition~\ref{d4} is a formalization of the concept of a ``good" law of
Probability Theory. The degree $\delta(\alpha)$ of deviation from such a law is
absolute when $\alpha$ fails the test i.e. $\delta (\alpha)= \infty$, the
probability of which is 0. The deviations can be effectively discovered since
$\delta$ is r.e. The logarithmic scale is chosen for convenience.


 \begin{prp}\label{p5}
 The following properties of $\alpha\in\Omega$ are equivalent for any recursive
measure $\mu$ and true for $\mu=\M:$\begin{enumerate}
 \item $\D(\alpha/\mu)<\infty$. (see Definition~\ref{d2}.)
 \item $\sup (\M(\alpha_n)/\mu(\alpha_n))<\infty$.
 \item For any randomness test $\delta:\ \delta (\alpha)<\infty$.
 \end{enumerate} \end{prp}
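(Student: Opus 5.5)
I will prove the cycle $(2)\Rightarrow(3)\Rightarrow(1)\Rightarrow(2)$ for a recursive measure $\mu$, and then treat $\mu=\M$ separately. The main tool is the universality of $\M$ from Proposition~\ref{p3}, used in two directions. First, I build an r.e. semimeasure from a test to show that a large test value forces $\M(\alpha_n)/\mu(\alpha_n)$ to be large. Second, the ratio $\M(\alpha_n)/\mu(\alpha_n)$ is itself (essentially) a $\mu$-test.

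\emph{(2)$\Rightarrow$(3).} Let $t\in\SKJ^+$ be r.e. with $\mu(t)\le1$. Define the semimeasure $\nu(f)=\mu(t f)$, i.e.\ on cylinders $\nu(x)=\int_{x} t\,d\mu$. Since $t$ is r.e. and $\mu$ recursive, $\nu$ is r.e., and $\nu(1)\le1$. By Proposition~\ref{p3}, $\nu\preceq\M$. Write $t=\sup_k t_k$ with the $t_k\in\SKF'$ enumerated. If $t(\alpha)>r$, then some $t_k>r$ on a neighbourhood $\alpha_n$, so $\nu(\alpha_m)\ge r\mu(\alpha_m)$ for all $m\ge n$. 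Hence $\sup_n \M(\alpha_n)/\mu(\alpha_n)\succeq t(\alpha)$. Consequently, (2) forces $t(\alpha)<\infty$, and with logarithms this even gives $\delta(\alpha)\prec\ln\sup_n(\M(\alpha_n)/\mu(\alpha_n))$.

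\emph{(3)$\Rightarrow$(1) and (1)$\Rightarrow$(2).} For (3)$\Rightarrow$(1), I will show that $\exp\D(\alpha/\mu)$ is, up to a constant, an r.e. function of $\alpha$ with $\mu$-average $\preceq 1$. Here $\alpha\mapsto \M(^{\alpha}_{\mu})=\int t^\rho_\mu(\alpha)\,d\M(\rho)$ is lower semicontinuous and r.e., because $\mu$ is recursive, so the condition $\mu(t^\rho)\le1$ can be checked in the limit in the right direction: it is a co-r.e. condition on the growing enumeration. By Fubini, its $\mu$-integral is $\int\mu(t^\rho_\mu)\,d\M(\rho)\le1$. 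I expect this to be the main obstacle: $t^\rho_\mu$ is set to $0$ once $\mu(t^\rho)$ exceeds $1$, which is not monotone. I would fix it by truncating each $t^\rho$ at the last stage where the enumerated part still has $\mu$-mass $\le 1$. This makes the integrand r.e. and changes nothing when $\mu(t^\rho)\le1$. Then $\D(\cdot/\mu)$ is a $\mu$-test, and (3) gives (1).

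For (1)$\Rightarrow$(2), take the $\mu$-test $t(\alpha)=r\sup_n \M(\alpha_n)/\mu(\alpha_n)$, suitably normalized with a constant $r$. Its $\mu$-average is bounded by the standard maximal inequality: the cylinders where the ratio first exceeds $c$ are disjoint and have total $\mu$-mass $\le \M(1)/c$, so an integral of the distribution tail with a slightly sublinear weight, as in the paper's $r<1$ trick, stays bounded. This $t$ is r.e. and is enumerated by some recursive $\rho_0$. The set of $\rho$ that enumerate the same subgraph after a finite prefix has $\M$-measure $\succeq \M(\{\rho_0\})>0$, since $\M$ dominates the point mass at a recursive sequence. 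So $\exp\D(\alpha/\mu)\succeq t(\alpha)$, and (1) gives (2).

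\emph{Case $\mu=\M$.} Property (2) is trivial. For (1), I would observe that, for fixed $\alpha$, the map $\rho\mapsto t^\rho_\M(\alpha)$ combined with the universal semimeasure yields a semimeasure $\nu(f)=\int t^\rho_\M f\,d\M(\rho)$. This is dominated by $\M$ applied to an averaged test, so by Proposition~\ref{p3} $\M(^{\alpha}_{\M})$ is bounded by the value at $\alpha$ of a bounded function. Alternatively, Theorem~\ref{t1} applied to the identity operator with $\D(\M/\M)\le 0$ ties it to (2). (3) for $\M$ then follows from the (2)$\Rightarrow$(3) argument, which used only that $\mu$ is an r.e. semimeasure.
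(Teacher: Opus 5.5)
\textbf{The gap: item 1 for $\mu=\M$ is not proved.} In your cycle, item 1 is reached only through (3)$\Rightarrow$(1), and that step needs $\mu$ recursive. The truncated $\tilde t^\rho$ must be r.e.\ in $\rho$, which requires confirming in finite time a bound such as $\mu(t^\rho_k)<2$. For $\mu=\M$ the masses $\M(t^\rho_k)$ are only lower semicomputable, so no such bound can ever be confirmed. A piece accepted on the strength of the current approximation of $\M$ may later acquire large mass, and it cannot be retracted. So you have no r.e.\ function of bounded $\M$-average dominating $\exp\D(\cdot/\M)$.

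Your separate paragraph on $\mu=\M$ does not fill this. The mixture you write is not shown to be r.e., so Proposition~\ref{p3} cannot be invoked. Rewriting it as ``$\M$ applied to an averaged test'' would need an interchange of two semimeasure integrals, which is not available. Theorem~\ref{t1} with the identity operator is vacuous, and $\D(\M/\M)\le0$ is only an averaged statement.

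The missing idea is the paper's direct (2)$\Rightarrow$(1) argument, which holds for every r.e.\ semimeasure $\mu$: normalize semimeasures, not functions.
Let $\mu'_\rho$ be obtained by accepting the finite approximations of the semimeasure $t^\rho\cdot\mu$ only while their total mass stays $\le1$; that mass is known exactly when each approximation is produced. This is r.e.\ in $\rho$, and $\mu'_\rho=t^\rho\cdot\mu$ whenever $\mu(t^\rho)\le1$. Then $\mu''(f)=\M(\rho\mapsto\mu'_\rho(f))$ is an r.e.\ semimeasure, so $\mu''\preceq\M$.
Since $\mu'_\rho$ need not have the form $\tilde t\cdot\mu$, the pointwise bound is extracted on cylinders. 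The inequality $\mu(t\,g_x)\ge\mu(x)\min_{\beta\supset x}t(\beta)$ gives $\exp\D(\varphi_x/\mu)\preceq\M(x)/\mu(x)$. By lower semicontinuity, $\D(\alpha/\mu)=\sup_{x\subset\alpha}\D(\varphi_x/\mu)\prec\ln\sup_n\M(\alpha_n)/\mu(\alpha_n)$, which is bounded by a constant when $\mu=\M$. This is just your (2)$\Rightarrow$(3) computation integrated against $\M(d\rho)$.

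\textbf{For recursive $\mu$ your cycle is a legitimate alternative} to the paper's (3)$\Rightarrow$(2)$\Rightarrow$(1)$\Rightarrow$(3). Your (1)$\Rightarrow$(2) merges the paper's (1)$\Rightarrow$(3) and (3)$\Rightarrow$(2). Your (2)$\Rightarrow$(3) is the single-test case of the mixture argument above. Your (3)$\Rightarrow$(1), where the $\M$-average of all tests is after truncation itself a test, replaces that mixture argument but needs $\mu$ recursive. Two repairs are needed.

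In (1)$\Rightarrow$(2), $r\sup_n\M(\alpha_n)/\mu(\alpha_n)$ is not a test for any constant $r>0$. The maximal inequality gives only the tail bound $1/c$, and the integral really diverges. For the fair-coin measure, $\M$ dominates the point mass at $000\ldots$, so the ratio is $\succeq2^k$ on the set of sequences with exactly $k$ leading zeros, which has measure $2^{-k-1}$. You need $t=(1-r)\bigl(\sup_n\M(\alpha_n)/\mu(\alpha_n)\bigr)^r$ with $r<1$; this is what the paper's $r\ln t$ means in logarithmic scale. Your tail estimate then gives $\mu(t)\le1$.

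In (3)$\Rightarrow$(1), accepting a stage when $\mu(t^\rho_k)\le1$ is a co-r.e.\ rule, so build in slack: accept while $\mu(t^\rho_k)<2$, which is verifiable for recursive $\mu$. Also, the ``Fubini'' step is only an inequality for the semimeasure $\M$, though concavity gives it in the direction you need.
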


 Item 2 means that $\mu$ on segments of $\alpha$ is not much smaller than the
{\em a priori} probability \M. (i.e., the assumption that $\alpha$ has occurred
randomly with probability distribution $\mu$ is at least as consistent with
reality as the {\em a priori} idea about occurring it with distribution \M).
This property is of $\mu$-probability 1 and implies all other effective
probabilistic laws.

 {\bf Proof:} For any recursive measure $\mu$ and $r<1$, it is easy to see that
$r\ln t$ is a $\mu$-test itself, where $t\simeq\sup\{\M (\alpha_n)/\mu
(\alpha_n)\}$ and thus (3)$\Rightarrow$ (2). For any semimeasure $\mu$, (1)
$\Rightarrow$(3) since the sequence $\rho$ enumerating the subgraph of
$\exp{\delta}$ is r.e. (thus $\M(\{\rho\})>0$) and $\mu(\exp\delta)\le1$. For
$\mu=\M$, (2) is obvious and it remains to prove (2)$\Rightarrow$(1) for any
r.e. semimeasure $\mu$. Let $\mu'_\rho$ is an r.e. family of r.e., with respect
to $\rho$, normalized semimeasures, such that $\mu'_\rho=t^\rho\cdot\mu $ if
$\mu(t^\rho)\le 1$. Let $\tau_\mu^f(\rho)=\mu'_\rho(f)$. Then
$\mu''(f)=\M(\tau_\mu^f)$ is obviously an r.e. semimeasure and
$\M(f)\succeq\mu''(f) \ge\exp\D((\mu \cdot f)/\mu)$. Let $\varphi_x(f)=
\min\{f(\alpha):\ x\subset\alpha\}$. Let $g_x(\alpha)=1$, if $x\subset\alpha$
and $0$ otherwise. Since $\mu\cdot g_x\ge\mu(x)\cdot\varphi_x$, we have: $\M
(x)/\mu(x) \succeq\exp \D(\varphi_x/\mu)$ and
$\D(\alpha/\mu)=\sup\{\D(\varphi_x/\mu ):\ x\subset\alpha\}
\prec\sup\{\M(\alpha_n)/\mu(\alpha_n)\}$. \qed

 Now we are ready to formulate the first of the two distinguished probabilistic
laws:

 \begin{law} [of Randomness] Let $\alpha\in\Omega$ be taken randomly with a
probability distribution $\mu$. Then $\D(\alpha/\mu)<<\infty$. \end{law}

 This Law was shown to imply all r.e. probabilistic laws. What about the other
ones? This is interesting for clarifying the relation between the algorithmic
and classical approaches to Probability Theory. Let us give an important
example of non-recursive laws.

 \begin{law} [of Independence]
 Let $\beta\in\Omega$ be chosen. A random $\alpha\in\Omega$ must be independent
of $\beta$, i.e., $\I(\alpha:\beta)<< \infty$.\end{law}

 \begin{nt}\label{n3} The assertions: ``$\ov\mu(A)= 0$ {\em for all r.e.
semimeasures} $\mu$" and {\em``there exist $\beta$, such that $\I(\alpha:\beta)
=\infty$ for all} $\alpha\in A$" are equivalent for any $A\subset\Omega$, as it
will follow from Lemma~\ref{l3}.\end{nt}

\subsection{\label{3.3}
 Covering the Classical Formulation of Probability Theory}

 The Law of Independence (as well as of Randomness) is violated only with
probability 0, and thus it is a law of Probability Theory in the customary
``classical" sense. This law varies only with the parameter $\beta$, and in its
formulation ($\I(\alpha:\beta)<\infty$) the probability $\mu$ is not mentioned
at all.

 The Independence Postulate in \ref{0.1} extends this law from the usual random
processes to any physically realizable ones. This suggests to bring this law
outside the bounds of Probability Theory and to consider other probabilistic
laws only for sequences which satisfy the Law of Independence. This turns out
to reduce any other probabilistic law (recursive or not) to the ``Law of
Randomness". Let $\I_\beta=\{\alpha:\ \I(\alpha:\beta)= \infty\}$,
$\D_\mu=\{\alpha:\ \sup(\M(\alpha_n)/\mu(\alpha_n))=\infty\}$ (= $\{\alpha:\
\D(\alpha/\mu)=\infty\}$ for an r.e. $\mu$).

 \begin{thr}\label{t2} Let $A\subset\Omega$ and $\mu (A)= 0$, then such $\beta$
exists that $A\subset\D_\mu\cup\I_\beta$, i.e., any probabilistic law follows
from the Laws of Randomness and Independence. \end{thr}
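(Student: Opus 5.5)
The plan is to reduce Theorem~\ref{t2} to Note~\ref{n3}, i.e.\ to Lemma~\ref{l3}. Note~\ref{n3} says that for any $B\subset\Omega$ with $\ov\nu(B)=0$ for all r.e.\ semimeasures $\nu$, there is a $\beta$ with $B\subset\I_\beta$. So it suffices to show that $B=A\setminus\D_\mu$ satisfies $\ov\nu(B)=0$ for every r.e.\ semimeasure $\nu$. Since $\ov\nu\le\nu\preceq\M$ and $\ov\nu$ is the largest measure below $\nu$, it is enough to prove that $\ov\M(A\setminus\D_\mu)=0$.

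Outside $\D_\mu$ we have $\sup_n\M(\alpha_n)/\mu(\alpha_n)<\infty$. Write $A\setminus\D_\mu=\bigcup_c B_c$ with $B_c=\{\alpha\in A:\ \M(\alpha_n)\le c\,\mu(\alpha_n)\ \forall n\}$; it suffices to show $\ov\M(B_c)=0$ for each $c$. On $B_c$ the semimeasure $\M$ is dominated by $c\mu$ on all cylinders meeting $B_c$. I will use a Vitali/density-type comparison of measures: if a measure $\lambda\le\M$ satisfies $\lambda(x)\le\M(x)\le c\mu(x)$ for every cylinder $x$ containing a point of $B_c$, then $\lambda(B_c)\le c\,\mu^*(B_c)$. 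To see this, cover $B_c$ by disjoint cylinders $x$, each meeting $B_c$, with $\sum\mu(x)$ close to $\mu^*(B_c)$. This covering is possible because $\mu(A)=0$ gives open covers of arbitrarily small $\mu$-measure; refine such a cover to disjoint prefix-minimal cylinders and discard those missing $B_c$. Then $\lambda(B_c)\le\sum\lambda(x)\le c\sum\mu(x)$, which is arbitrarily small. Hence $\ov\M(B_c)=0$, and by countable additivity $\ov\M(A\setminus\D_\mu)=0$.

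The remaining step is to apply Note~\ref{n3} to $A\setminus\D_\mu$, which yields $\beta$ with $A\setminus\D_\mu\subset\I_\beta$, that is, $A\subset\D_\mu\cup\I_\beta$. If $\mu$ is not r.e., I would read $\D_\mu$ via the supremum definition given before the theorem, and the argument above never uses computability of $\mu$.

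The main obstacle is the measure-theoretic step. First, measurability of $B_c$ is not guaranteed, so the argument must use outer measures together with the fact that $\ov\M$ is a genuine measure. Second, one must verify that "$\lambda\le\M$ as semimeasures" gives $\lambda(x)\le\M(x)$ on cylinders, which holds since cylinder indicators lie in $\SKF$. Everything else rests on Lemma~\ref{l3} through Note~\ref{n3}, which I take as given.
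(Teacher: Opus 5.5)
Your proposal is correct and is essentially the paper's proof. The paper also first gets $\ov\M(A\setminus\D_\mu)=0$ from $\M(\alpha_n)\simeq\mu(\alpha_n)$ off $\D_\mu$, and your cylinder-covering argument just fills in the detail it leaves implicit; it then applies Lemma~\ref{l3} to a lower semicontinuous $\tau$ with $\ov\M(\tau)<\infty$ and $\tau=\infty$ on $A\setminus\D_\mu$, which is exactly how Note~\ref{n3} follows from Lemma~\ref{l3}. One small point: in your reduction the maximality you need is that of $\ov\M$ among measures below $\M$, not that of $\ov\nu$ below $\nu$; alternatively, run your covering argument directly on $\ov\nu/c$.
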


\paragraph{Proof:} Since $\mu(A)=0$ and $\alpha\mathrel{\not\in} \D_\mu\
\Rightarrow\ \M(\alpha_n)\simeq\mu(\alpha_n)$, we have
$\ov\M(A\setminus\D_\mu)=0$. \ \ Then there exist $\tau\in\SKJ^+$ such that,
$\ov\M(\tau)<\infty$ and $\tau(A')=\{\infty\}$, and the Theorem follows from

\begin{lem}\label{l3} For each borel $\tau\in(\ov\Re{}^+)^\Omega:\ \ \
\ov\M(\tau)<\infty$ iff $\beta\in\Omega$ exists such that
$\I(\alpha:\beta)\succ\ln\tau(\alpha)$.\end{lem}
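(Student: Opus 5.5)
The plan is to prove the two directions separately. The direction ``$\Leftarrow$'' should follow from superadditivity of semimeasures together with Proposition~\ref{p5} applied to $\mu=\M$. The direction ``$\Rightarrow$'' needs an explicit construction of $\beta$ from $\tau$.

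\emph{Direction $\Leftarrow$.} Assume $\I(\alpha:\beta)\succ\ln\tau(\alpha)$ for a fixed $\beta$. It suffices to show $\ov\M(\exp\I(\cdot:\beta))<\infty$.
\begin{enumerate}
\item Write $\exp\I(\alpha:\beta)=\M(h_\alpha)$, where $h_\alpha(\rho)=t^\rho_{\M\otimes\M}(\alpha,\beta)$, and note $\ov\M\le\M$.
\item Use the concavity in the Conventions, $\M(f+g)\ge\M(f)+\M(g)$, extended to integrals. This gives
$\int\exp\I(\alpha:\beta)\,d\ov\M(\alpha)\le\M(g_\beta)$, where $g^\rho(\gamma)=\int t^\rho_{\M\otimes\M}(\alpha,\gamma)\,d\ov\M(\alpha)$.
\item For each $\rho$, the function $g^\rho$ is lower semicontinuous and satisfies $\M(g^\rho)\le(\M\otimes\M)(t^\rho)\le 1$. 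So it is an $\M$-test enumerable relative to $\rho$, up to replacing the non-r.e.\ $\ov\M$ by an r.e.\ upper approximation.
\item Hence $\M(g_\beta)\preceq\exp\D(\beta/\M)$, which is finite by Proposition~\ref{p5} with $\mu=\M$.
\end{enumerate}
The delicate point is step 3: it needs a uniform r.e.\ map $\rho\mapsto\rho'$ with $t^{\rho'}_\M\ge g^\rho$. Then the universality of $\M$, exactly as in the proof of Theorem~\ref{t1}, yields $\M(g_\beta)\preceq\exp\D(\beta/\M)$.

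\emph{Direction $\Rightarrow$.} First replace $\tau$ by a lower semicontinuous $\tau'\ge\tau$ with $\ov\M(\tau')<\infty$. This is possible by the extension of $\mu$ to $(\ov\Re{}^+)^\Omega$ in the Conventions, possibly after inflating $\tau$ off an $\ov\M$-null set. Then let $\beta$ encode two things:
\begin{itemize}
\item an enumeration of the subgraph of $\tau'$;
\item a sufficiently precise description of the non-r.e.\ measure $\ov\M$, i.e.\ enough of the defect $\M-\ov\M$ to certify $\ov\M(\tau')\le c$.
\end{itemize}
The candidate test is $t(\alpha,\gamma)=\tau'_\gamma(\alpha)$. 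It is read off from $\gamma$, truncated wherever the certificate in $\gamma$ fails, and normalized so that $(\M\otimes\M)(t)\le1$. Taking $\rho$ to be a fixed r.e.\ sequence enumerating this single test, so that $\M(\{\rho\})>0$, gives $\exp\I(\alpha:\beta)\succeq t(\alpha,\beta)\succeq\tau(\alpha)$.

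\emph{Main obstacle.} The hard step is the bound $(\M\otimes\M)(t)\le1$ for the test in direction ``$\Rightarrow$''. The certificate inside $\gamma$ only controls $\ov\M(\tau'_\gamma)$, whereas the test is integrated in $\alpha$ against $\M$, and $\M(\tau'_\gamma)$ may be infinite. The approach I would try first is to make the coding so redundant that sequences $\gamma$ encoding large portions of the defect $\M-\ov\M$ receive correspondingly small $\M$-mass. Concretely, I would write $\M=\ov\M+(\M-\ov\M)$ and charge the part of $t$ where $\M$ exceeds $\ov\M$ against the $\M$-weight of the $\gamma$ that certify it. This trade-off is the same one underlying Note~\ref{n3} and the proof of Proposition~\ref{p4}(2).
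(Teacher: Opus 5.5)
Your direction ``$\Leftarrow$'' is essentially the paper's argument. The r.e.\ upper approximation of $\ov\M$ that your step~3 needs is just $\M$ itself. Then $\ov\M(\exp\I(\cdot:\beta))\preceq\exp\D(\ov\M\otimes\beta/\M\otimes\M)\preceq\exp\D(\M\otimes\beta/\M\otimes\M)\preceq\exp\D(\beta/\M)<\infty$, by concavity of $\M$, monotonicity of $\D$, Theorem~\ref{t1} for $\gamma\mapsto\M\otimes\gamma$, and Proposition~\ref{p5}; no hand-built map $\rho\mapsto\rho'$ is needed.

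Direction ``$\Rightarrow$'', however, has a genuine gap, and the single-test design cannot be repaired. Take any $\beta$ and any fixed r.e.\ $\rho_0$, so that $\M(\{\rho_0\})>0$. The last two links of the chain above give $\M(\{\rho_0\})\,\M\bigl(t^{\rho_0}_{\M\otimes\M}(\cdot,\beta)\bigr)\preceq\exp\D(\beta/\M)<\infty$. Hence one test with $t(\cdot,\beta)\succeq\tau$ forces $\M(\tau)<\infty$, which is precisely the case you flag as failing. Redundant coding in $\beta$ cannot help, since this bound holds for every $\beta$. Also, the truncation in Definition~\ref{d2} is all-or-nothing on the whole function $t^\rho$, so a single test cannot be ``truncated wherever the certificate in $\gamma$ fails''.

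Certifying $\ov\M(\tau')\le c$ is not the usable content of the hypothesis either. Because $\ov\M$ is the largest measure below $\M$, what $\ov\M(\tau)<\infty$ gives is a decomposition $\tau\le\sum_f\beta(f*)f$. Here $f$ ranges over nonnegative $f\in\SKF'$ with the natural enumeration $f\mapsto f*$, $\beta(f*)=0$ whenever $\M(f)>1$, and $\sigma=\sum\beta(i)<\infty$.

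The missing idea is to use the outer $\M$ over $\rho$ in Definition~\ref{d2} as a mixture of separately normalized tests, one per piece.
\begin{itemize}
\item Let $\beta$ be the weight sequence itself, rescaled so that $\sigma\le1$. Let $\rho$ be uniform in $[0,1]$, and let $s(\rho,\gamma)$ be the largest $s$ with $\sum_{i<s}\gamma(i)<\rho$.
\item Consider the test $\chi(\alpha,\gamma)=e^kf(\alpha)$ if $s(\rho,\gamma)=f*$, and $0$ otherwise. The $\gamma$ selecting $f*$ have $\M$-mass $\preceq\m(f*/\rho)$, so $\M^2(\chi)\preceq\M(f)\,e^k\,\m(f*/\rho)$. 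Hence $\chi$ survives truncation when $\M(f)\le1$ and $k=\K(f*/\rho)-c$.
\item Mix over $(f*,k)$ with weight $\m((f*,\K(f*/\rho))/\rho)\simeq e^{-\K(f*/\rho)}$, as in Lemma~\ref{l1}. This cancels the boost $e^k$.
\item Average over $\rho$, with $L$ the Lebesgue measure. This gives $\exp\I(\alpha:\beta)\succeq\sum_f f(\alpha)\,L\{\rho:s(\rho,\beta)=f*\}=\sum_f\beta(f*)f(\alpha)\ge\tau(\alpha)$.
\end{itemize}
Your ``charging'' heuristic gestures at the fact that few $\gamma$ select a given piece. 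It only becomes an argument once $\tau$ is split into pieces of $\M$-mass at most $1$, each handled by its own separately truncated test.
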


``If" follows from Theorem~\ref{t1} (for $A:\ \beta\to\M\otimes\beta$) and item
1 of Proposition~\ref{p5}, since for any $\beta$: \\ $\ln
\ov\mu(\exp\I(\alpha:\beta))\ \preceq\ \D(\ov\mu\otimes\beta/\M\otimes\M)$
$\preceq\ \D(\M\otimes\beta/\M\otimes\M)\ \preceq\ \D(\beta/\M)<\infty$.

	Other side: $\M(\tau)<\infty$, then $\beta\in(\Q^+)^\N$ exists such
that $\M(f)>1\ \Rightarrow\ \beta(f*)=0$ (for a natural effective enumeration
$f\to f*$ of $\SKF^+$), and $\sigma=\sum\beta(i)<\infty,\ \tau<\sum
f\cdot\beta(f*)$. Let $L$ be the uniform measure on $[0,\sigma]$ and,
$s(\rho,\beta)$ be the largest integer for which $\sum\{\beta(i):\ i<
s(\rho,\beta)\} <\rho\in[0,\sigma]$. Let $A(\rho ,f,k)$ be a sequence
enumerating the subgraph of $\chi$, where $\chi(\alpha,\beta)=e^kf(\alpha)$, if
$f*= s(\rho,\beta)$, otherwise $\chi(\alpha,\beta)= 0$. Obviously $\M^2(\chi)\
\preceq\ \M(f)\cdot e^k\cdot\m(f*/\rho)$ and thus, for some constant $c$, if
$\K(f*/\rho)\ge k+ c,\ \M(f)\le1$, then
$(^{(\alpha,\beta)}_{M^2})(A(\rho,f,k))= \chi(\alpha,\beta)$. Like in
Lemma~\ref{l1}, $\m((f*,\K(f*/\rho))/ \rho)\ \simeq$ $e^{-\K(f*/\rho)}$ and
then: $\exp\I(\alpha,\beta)=\M(_{M^2}^{(\alpha,\beta)})\ \succeq\ \sum
f(\alpha)\cdot L\{\rho:\ s(\rho,\beta)= f*\}$. And $\I(\alpha,\beta)\ \succ\
\ln\sum (f(\alpha)\cdot\beta(f*)) =\tau(\alpha)$, since $L\{\rho:\
s(\rho,\beta)= f*\}=\beta(f*)$. \qed

\section{\label{4} Intuitionistic Mathematics}

	\subsection{\label{4.1} A Digression}

 The second order theories (permitting quantification over functions or
sequences) are much more complicated logically than the first order ones. Some
mathematicians (like intuitionists) considered these complications dangerous in
terms of possible paradoxes. In particular, they assume sequences to be formed
by sequential ``free choices" thus resulting from ``physical" (or mental)
events rather then from logical definitions. Therefore, applicability of usual
logical operations to them is not {\em a priori} obvious when these operations
have no ``physical" analogies. For example, classical universal quantification
assumes the unrealistic ability to scan all conceivable sequences. The hope was
to get a less ``suspicious" mathematics restricting the logical procedures and
postulates to only those closer related with ``physical intuition".

 However, obscurity of notions like ``free choice" and of the physical
intuition makes it difficult to choose formal principles reflecting adequately
the nature of physically generated sequences. A result is a great variety of
intuitionistic principles and theories that strengthen, weaken or contradict
each other. These theories are often so strong that the relation of their
principles to physical intuition stops to be obvious. In fact, they are often
equiconsistent to the corresponding classical theories. On the other hand, they
are too weak leaving independent many other principles of intuitionistic
reasoning. This provides room for creativity in extending these theories, but
eliminates the hope to get a ``canonical" theory with some kind of
completeness.

 To deal with these difficulties an axiom schema is introduced below
formalizing Independence Postulate from Introduction. The intuitionistic {\em
second} order arithmetic obtained is equiconsistent to (is a {\em conservative}
extension of) the classical {\em first} order arithmetic, formulated without
disjunction and existential quantifier. So it's new principles are ``purely
logical", i.e., imply no new facts of classical Number Theory. On the other
hand, it is {\em complete} i.e. is a {\em maximal} conservative extension. The
Independence Postulate brings these ``virtues" by excluding sequences with
unbounded information about the validity of mathematical statements. It is
natural to attribute the usual troubles of second order theories to such fancy
``logical" sequences which, in fact, can not physically exist.

\subsection {\label{4.2} The Preliminary Calculus A}

 Our theory AI will be constructed in Section~\ref{4.3} by extending the basic
calculus A, described below. The language of $A$ is the second order
arithmetic. It contains the first-order arithmetic (see [Kleene 67]
section~38), a countable list of second order variables (denoting sequences or
functions of natural numbers), terms $\alpha(t)$ and formulas $\forall\alpha F$
and $\exists\alpha F$, for all terms $t$, formulas $F$, and second order
variables $\alpha$. A formula is called {\em absolute} if it does not contain
$\exists,\ \vee$ and quantification over {\em second order} variables. Absolute
formulas have identical meaning and equivalent provability in intuitionistic
and classical theories. Pairs (and tuples) of integers, sequences and terms are
defined the same way as in Notation. Abbreviations $s=\alpha(n,\downarrow)$
mean $\exists k(\alpha(n,k)= s+ 1 \& (\forall k'< k\ \alpha(n,k') = 0))$.

 The postulates of $A$ consist of the ones of first order arithmetic (see
[Kleene 67], p.387, List of Postulates, Schema 8 is taken in the intuitionistic
version 8') and three second order postulates:

\begin{description}
 \item [Schema of Choice] $(\forall n(\neg A\Rightarrow\exists k\ B(k)))\
 \Rightarrow\ \exists\alpha\forall n(\neg A\ \Rightarrow\
 B(\alpha(n,\downarrow)))$ \hfil ({4.2.1})\hspace{1pc}
 \item [Markov Principle] $(\neg\forall n\ \alpha(n)=0)\ \Rightarrow\
 \exists n\ \ \alpha(n)\ne 0$ \hfil ({4.2.2})\hspace{1pc}
 \item [Axiom of Countability] $\exists\alpha\forall\beta\exists k\forall n\
 \beta (n)=\alpha(k,n,\downarrow)$ \hfil ({4.2.3})\hspace{1pc}\end{description}

 The axioms of $A$ are not new and need no detailed discussion. Still in any
complete (in a sense of Theorem~\ref{t3}) theory these axioms must be provable
or refutable or equivalent to some undecidable absolute statements of Number
Theory. The last two possibilities seem to be less natural. It is known that (
(4.2.1) - (4.2.3)) are inconsistent with the principle of continuity.

 Of course, the calculus $A$ is still too weak. Nevertheless:

 \begin{prp}\label{p6} For any formula $F$ an absolute $P$ exists such that $A\
\vdash\ F\iff\forall\alpha\exists\beta\ P$.\end{prp}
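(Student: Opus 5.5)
The plan is to prove the equivalence \emph{inside} $A$ by reducing every second order quantifier of $F$ to a number quantifier relative to a fixed enumeration of all sequences, and then pushing the number quantifiers back out as function quantifiers of the required shape. The tools are exactly the three second order postulates: the Axiom of Countability (4.2.3), the Schema of Choice (4.2.1) and the Markov Principle (4.2.2).

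\textbf{Step 1 (fixing an enumerator).} Write $\alpha_k$ for $\lambda n.\,\alpha(k,n,\downarrow)$. Let
$$\mathrm{En}(\alpha)\ :\equiv\ \forall\beta\exists k\forall n\ \beta(n)=\alpha(k,n,\downarrow),$$
so (4.2.3) is $\exists\alpha\,\mathrm{En}(\alpha)$. Rewriting $\exists k$ as $\exists\gamma$ with the term $\gamma(0)$ in place of $k$, $\mathrm{En}(\alpha)$ becomes $\forall\beta\exists\gamma\,E(\alpha,\beta,\gamma)$ with $E$ absolute. Under the hypothesis $\mathrm{En}(\alpha)$, every subformula $\forall\gamma\,G$ of $F$ is provably equivalent to $\forall k\,G[\alpha_k/\gamma]$, and every subformula $\exists\gamma\,G$ to $\exists k\,G[\alpha_k/\gamma]$. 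By induction on $F$ this yields a first order formula $F^\ast(\alpha)$ in which $\alpha$ is the only second order variable, such that
$$A\vdash\ \mathrm{En}(\alpha)\Rightarrow(F\iff F^\ast(\alpha)).$$

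\textbf{Step 2 (normal form of $F^\ast$).} Next I bring $F^\ast(\alpha)$ to the form $\exists\psi\,Q(\alpha,\psi)$ with $Q$ absolute, by induction on $F^\ast$:
\begin{itemize}
\item A disjunction $B\vee C$ is replaced by $\exists k((k=0\Rightarrow B)\,\&\,(k\ne0\Rightarrow C))$.
\item A number existential $\exists k\,B$ is replaced by $\exists\psi\,B[\psi(0)/k]$; since $\psi(0)$ is a term, absoluteness is preserved.
\item A prefix $\forall n\exists\psi\,Q$ is turned into $\exists\psi'\forall n\,Q[\psi'_n/\psi]$. Under $\mathrm{En}(\alpha)$ I first code $\psi$ by an index $m$ with $\psi=\alpha_m$. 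Then I apply the Schema of Choice, taking the premise $\neg A$ to be trivially true, to $\forall n\exists m$; this produces a choice sequence $\psi'$ with $m=\psi'(n,\downarrow)$.
\item Implications whose antecedent has the form $\exists\psi\,Q$ are moved out as $\forall\psi(Q\Rightarrow\dots)$.
\item Implications whose antecedent is $\forall\exists$ are handled by the Markov Principle. It shows that $\neg\forall n\,\alpha(n)=0$ and $\exists n\,\alpha(n)\ne0$ agree, which allows double negations in front of existentials over absolute matrices to be removed.
\end{itemize}
Combined with Step 1, this gives $\mathrm{En}(\alpha)\Rightarrow(F\iff\exists\psi\,Q(\alpha,\psi))$.

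\textbf{Step 3 (discharging the enumerator).} Since (4.2.3) proves that some enumerator exists, and $F$ does not mention $\alpha$, we obtain
$$F\iff\forall\alpha\big(\mathrm{En}(\alpha)\Rightarrow\exists\psi\,Q(\alpha,\psi)\big).$$
The remaining task is to absorb the hypothesis $\mathrm{En}(\alpha)$, which has the form $\forall\beta\exists\gamma E$, into a single $\forall\alpha\exists\beta P$ prefix. I would try to replace $\mathrm{En}(\alpha)$ by a \emph{positive} refutation of it. By Markov and Choice, the failure of enumeration should be witnessable as
$$\exists\beta\exists\chi\forall k\ \beta(\chi(k))\ne\alpha(k,\chi(k),\downarrow).$$
I then aim to prove
$$F\iff\forall\alpha\exists\beta\,\big[\mathrm{NotEn}(\alpha,\beta)\ \vee\ Q(\alpha,\beta)\big],$$
packing $\beta,\chi,\psi$ into one sequence and eliminating $\vee$ as above.

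\textbf{Main obstacle.} The hard step is the left-to-right direction of Step 3. For an $\alpha$ about which we cannot intuitionistically decide whether $\mathrm{En}(\alpha)$ holds, we must still produce a witness. My first attempt is to compose $\alpha$ with a fixed enumerator $\alpha^0$ given by (4.2.3), say $\alpha'(2k)=\alpha^0_k$ and $\alpha'(2k+1)=\alpha_k$. This $\alpha'$ is always an enumerator, so the matrix can always be evaluated at $\alpha'$. I expect that some version of this trick is what makes the reduction go through; if it does not, I would instead restrict $\alpha$ in the final prefix to range over indices of enumerators.
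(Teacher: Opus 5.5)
Your Steps 1--2 amount to a syntactic version of what the paper gets in one stroke from Kleene realizability relativized to an oracle: $F$ holds iff some number realizes $F$ relative to a universal sequence. But the step you flag as the main obstacle is the actual content of the proposition, and you leave it open.

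The $\mathrm{NotEn}$ route would need, for every $\alpha$, a positive choice between a witnessed failure of enumeration and a witness for $Q(\alpha,\cdot)$. Nothing in $A$ provides this. ``Restricting $\alpha$ to enumerators'' just reinstates the non-absolute hypothesis $\mathrm{En}(\alpha)$.

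Your merge is the right device, but not with a \emph{fixed} $\alpha^0$. That is either a free parameter or an outer $\exists\alpha^0$, which gives the wrong prefix. The missing point is that the existential witness must carry the enumerator, while the universal variable secures enumeration in the converse direction. With $\beta=(\beta_0,\beta_1)$, put $P(\alpha,\beta):=Q(\gamma,\beta_1)$, where $\gamma(2k,n,j)=\beta_0(k,n,j)$ and $\gamma(2k+1,n,j)=\alpha(k,n,j)$; substituting $\gamma$ into $Q$ by cases keeps it absolute.
\begin{itemize}
\item Forward: if $F$ holds and $\alpha$ is arbitrary, take an enumerator $e$ from (4.2.3) and set $\beta_0=e$. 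Then $\gamma$ is an enumerator, Steps 1--2 give $\psi$ with $Q(\gamma,\psi)$, and $\beta=(e,\psi)$ works.
\item Converse: instantiate $\alpha$ by an enumerator. Then $\gamma$ is an enumerator whatever $\beta_0$ is, so $Q(\gamma,\beta_1)$ yields $F$.
\end{itemize}
This is exactly the paper's final form $F\iff\forall\beta\exists\alpha\,P_F(\alpha(0),(\alpha,\beta))$. The oracle is the join of the universally quantified sequence with the witness, and the witness supplies both the realizer $\alpha(0)$ and, in the forward direction, a universal sequence.

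Step 2 also has a hole at the one case that needs real work.
\begin{itemize}
\item After moving $\exists\psi$ out of an antecedent, you are left with $\forall\psi\,(Q_1(\psi)\Rightarrow\exists\chi\,Q_2(\chi))$, and none of your bullets normalizes this. After coding $\psi,\chi$ by row indices, you must apply the Schema of Choice with the nontrivial premise $\neg A:=\neg\neg Q_1$. This is equivalent to $Q_1$ because absolute formulas are stable, and it is what the premise $\neg A$ in (4.2.1) is for. Your $\forall n\exists\psi$ case is only the special case with a trivial antecedent.
\item Your Markov bullet is false as stated. (4.2.2) only turns $\exists$ over \emph{decidable} matrices into $\neg\forall\neg$; you do need that, for the $\downarrow$-abbreviations and for totality of rows. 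But $\neg\neg\exists n\,N\Rightarrow\exists n\,N$ for absolute $N$ with parameters is not derivable in $A$. Take, with $T$ Kleene's $T$-predicate, $N(m,n)$ to be $n\le1\ \&\ (n=0\Rightarrow\forall k\,\neg T(m,m,k))\ \&\ (n=1\Rightarrow\neg\forall k\,\neg T(m,m,k))$. Then $\forall m\,\neg\neg\exists n\,N(m,n)$ is provable. Removing the double negation would, together with Markov and (CT) (consistent with $A$ via Kleene's realizability), decide the halting problem recursively.
\item In Step 1, rows of an enumerator may be partial. So $\forall\gamma\,G$ must become $\forall k(\mathrm{Tot}(k)\Rightarrow G[\alpha_k/\gamma])$, and dually for $\exists\gamma$. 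This is harmless, since by Markov $\mathrm{Tot}(k)$ becomes the absolute formula $\forall i\,\neg\forall j\;\alpha(k,i,j)=0$.
\end{itemize}
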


\paragraph{Proof:} Axioms of $A$ allow a construction analogous to Kleene's
recursive realizability, using the universal sequence $\alpha$ from axiom
(4.2.3). Namely, a formula $P_F(x,\alpha)$, meaning ``a number $x$ realizes a
formula $F$ with respect to a sequence $\alpha$" can be defined as in Kleene's
{\em Introduction to Metamathematics,} Chapter 2, except that recursiveness of
all functions used is considered with respect to $\alpha$. Then any formula $F$
is equivalent to the existence of a realization of $F$ with respect to a
universal $\alpha$. It is easy (though bulky) to check that $A$ contains all
the axioms necessary for formalizing these arguments, i.e., the deduction of
$F\iff\forall\beta\exists\alpha\ P_F(\alpha(0),\ (\alpha,\beta)\ )$. \qed

\subsection{\label{4.3}
	The Calculus AI;\ its Relative Consistency and Completeness}

 Let $P(n)$ be an absolute formula with a single free variable $n$. A finite
binary sequence $p$ is {\em compatible} with $P$ (denoted $p\subset P$), if
$\forall n\le l(p):\ (p(n)= 0\ \iff\ P(n))$. The abbreviation $\I(\alpha:P)$
means $\sup\{\I(\alpha:p):\ p\subset P\}$. For a given $P$, the statement
$\I(\alpha:P)\le c$ can be easily expressed by an absolute formula with free
variables $\alpha$, $c$. This is used in the following axiom schema with a
parameter P:

\begin{Postulate} [Independence]
 $\exists c\ \I(\alpha:P)\le c$ \hfil (IP).\end{Postulate}

 The property of completeness (defined in \ref{2.3}) is expressible by an
absolute formula $C(\alpha)$. The last axiom of AI asserts implementability of
sequences completion mentioned in Proposition~\ref{p4}: $\exists\tau
C(\alpha,\tau)$. The double negations of this axiom and of (IP) would be
sufficient for our purposes, but the chosen ones are simpler.

 \begin{dfn}\label{d5} A theory $T$ is called {\em absolute} if for every
closed formula $F$ an absolute (see \ref{4.2}) formula $P$ exists such that
$T\vdash \neg F\iff P$.\end{dfn}

 Replacing (4.2.3) in $A$ with {\em Church's thesis} (CT): $\exists k\forall
n:\ \beta (n)= U(k,n,\downarrow)$ (where U is a universal recursive function)
one gets the theory of recursive realizability of S.C. Kleene. It is a known
example of an absolute theory. Our theory AI is not, of course, absolute,
inasmuch as (CT) is independent of it, and can not be reduced to an absolute
formula. Then to get an absolute theory one needs an axiom implying either (CT)
or $\neg$(CT). It turns out that this is sufficient as well:

 \begin{lem}\label{l4} For any closed formula $F$, four absolute formulas $P_1,
P_2, P_3, P_4$ exist such that these statements are deducible in AI:
\begin{enumerate} \item $\neg\neg (P_1\vee P_2\vee P_3\vee P_4)$; \item
$P_1\Rightarrow \neg\neg F$; \item $P_2\Rightarrow\neg F$; \item
$P_3\Rightarrow(\neg F\iff (CT))$; \item $P_4\Rightarrow(\neg F\iff\neg(CT))$
\end{enumerate}\end{lem}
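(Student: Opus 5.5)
Our plan is to reduce $F$ to the normal form of Proposition~\ref{p6} and then use (IP) together with the completion axiom to replace the second-order quantifiers by first-order ones, ``up to'' the single undecided principle (CT). By Proposition~\ref{p6}, $\neg F$ is equivalent in $A$ (hence in AI) to $\forall\alpha\exists\beta\,P(\alpha,\beta)$ with $P$ absolute; applying the proposition to $\neg F$ rather than $F$ is harmless, since only $\neg F$ appears in items 3--5. The realizability construction in the proof of Proposition~\ref{p6} relativizes to an oracle sequence. So the truth of $\neg F$ is governed by which oracles $\alpha$ admit a realizer. The four cases should reflect four possibilities:
\begin{itemize}
\item recursive oracles and all ``physical'' oracles both realize $\neg F$;
\item neither does;
\item only recursive oracles do (so $\neg F$ behaves like (CT));
\item only non-recursive physical ones do (so $\neg F$ behaves like $\neg$(CT)).
\end{itemize}

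The first step is to pin down what a ``physical'' sequence is inside AI. By the completion axiom, every $\alpha$ is a projection of a complete $(\alpha,\tau)$. By (IP), applied with $P$ taken to be the absolute formula coding the truth of the relevant $\Pi/\Sigma$ statements (essentially a universal r.e.\ set and its jumps, as needed for deciding $P(\alpha,\beta)$), $(\alpha,\tau)$ has bounded information about that set. By Note~\ref{n3} and Lemma~\ref{l3}, the set of complete sequences independent of a given $\beta$ has full $\ov\M$-measure. This is where Proposition~\ref{p4}(2) enters: independent sequences remain independent after completion. Therefore a statement of the form ``$\exists\beta P(\alpha,\beta)$ holds for all such $\alpha$'' should be equivalent to a measure statement: the $\M$-measure (computed along segments of complete sequences, where $\M$ agrees with a recursive measure) of the set of $\alpha$ failing it is $0$. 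For the class of sets definable from $P$, this is a first-order arithmetical, indeed absolute, condition. The recursive $\alpha$ are the atoms of $\M$; they form the special positive-mass part, which is why (CT) is singled out.

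Concretely, we define absolute formulas as follows. $Q_{\mathrm{rec}}$ says ``for every recursive $\alpha$ (every index $k$ of a total recursive function) a realizer exists,'' expressed arithmetically. $Q_{\mathrm{gen}}$ says ``the $\M$-measure of the non-atomic part of the set of $\alpha$ without realizer is $0$.'' Then set
\begin{itemize}
\item $P_1 = Q_{\mathrm{rec}}\wedge Q_{\mathrm{gen}}$;
\item $P_2$ = the negations of both;
\item $P_3 = Q_{\mathrm{rec}}\wedge\neg Q_{\mathrm{gen}}$;
\item $P_4 = \neg Q_{\mathrm{rec}}\wedge Q_{\mathrm{gen}}$.
\end{itemize}
We replace $\vee,\exists$ by their classical double-negated forms, so that all four are absolute. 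Item 1 is then classical excluded middle under $\neg\neg$, which is intuitionistically valid for absolute (negative) formulas. Items 2--5 follow from two AI-derivable facts:
\begin{itemize}
\item under (CT) every sequence is recursive, so $\neg F\iff Q_{\mathrm{rec}}$;
\item under $\neg$(CT), the universal sequence of (4.2.3) is non-recursive, and by (IP) together with the measure argument, $\neg F\iff Q_{\mathrm{gen}}$.
\end{itemize}
Combining these gives items 4 and 5. Items 2 and 3 follow because, whether (CT) holds or not, $\neg F$ is (doubly negatively) decided. The double negation in item 2 comes from only having $\neg\neg$((CT)$\vee\neg$(CT)).

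The main obstacle is the second fact: proving inside AI that for non-recursive, independent, complete $\alpha$, the existence of a realizer $\beta$ is \emph{uniform}, i.e.\ governed by a single measure condition. The danger is a non-recursive $\alpha$ lying in a null set definable from $P$, yet not detectably dependent on a fixed $\beta$. To handle this, we would first apply Lemma~\ref{l3} to the lower-semicontinuous test $\tau$ that is infinite on the exceptional null set, obtaining a $\beta_0$ with $\I(\alpha:\beta_0)=\infty$ there. We would then check that $\beta_0$ is computable from a finite jump of the universal r.e.\ set, as in the proof of Proposition~\ref{p4}(2), so that (IP) with $P$ coding that jump excludes such $\alpha$. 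Formalizing this within AI rather than in the metatheory, using only the absolute expressibility of $\I(\alpha:P)\le c$ and of $C(\alpha)$, is the delicate part.
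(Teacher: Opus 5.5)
Your overall plan matches the paper's, whose proof of Theorem~\ref{t3} supplies the hard half. Under (CT) the theory is Kleene's recursive realizability, which is absolute. Under $\neg$(CT), $\neg F$ is shown equivalent to an absolute ``almost all oracles'' condition, using that the universal sequence $\gamma_U$ of (4.2.3) is non-recursive, equivalent to a complete sequence, and independent of all absolutely definable sequences. The cases are then glued by $\neg\neg((CT)\vee\neg(CT))$. Your labels are swapped, though: by your own two facts, $Q_{\mathrm{rec}}\wedge Q_{\mathrm{gen}}$ yields $\neg F$, so it is the lemma's $P_2$, and the conjunction of the two negations is $P_1$.

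\textbf{The gap.} It lies in the claimed equivalence $\neg F\iff Q_{\mathrm{gen}}$ under $\neg$(CT). Lemma~\ref{l3} (or Theorem~\ref{t2}), together with (IP) and the completion axiom, gives only one direction: if the exceptional set is null, $\gamma_U$ avoids it. For the converse you need that when the exceptional set is \emph{not} null, its complement among non-recursive sequences \emph{is} null, so $\gamma_U$ is forced into it. You give no reason for this. For a set of intermediate measure, AI would not decide on which side $\gamma_U$ lies, and $\neg F$ would have no absolute equivalent.

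\textbf{The missing idea is the $0$--$1$ law.} The paper applies Proposition~\ref{p6} to $F=\forall\alpha\exists\beta P$ itself. It then takes the set $A$ of oracles $\gamma$ on which $F$ is \emph{rejected}, i.e.\ $\forall r\,\neg\forall k\,P(k(\gamma),r(k)(\gamma))$, with $k$ ranging over recursive operators applicable to $\gamma$ and $r$ over recursive functions. The steps are:
\begin{enumerate}
\item $A$ is invariant under recursive reversible transformations of $\gamma$, in particular under finite changes. So Kolmogorov's $0$--$1$ law gives $B(A)\in\{0,1\}$.
\item Via Lemma~\ref{l5}, one of $A\setminus R$ and $(\neg A)\setminus R$ is null for every recursive measure, where $R$ is the set of recursive sequences.
\item Only then does Theorem~\ref{t2} yield an absolutely definable sequence on which all complete members of the null side depend, and (IP) with completion places $\gamma_U$ on the other side.
\item Universality of $\gamma_U$ (every sequence is $k(\gamma_U)$ for some $k$) turns ``$F$ is rejected on $\gamma_U$'' into $\neg F$, and ``not rejected'' into $\neg\neg F$.
\end{enumerate}

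Your ``set of $\alpha$ failing $\exists\beta P(\alpha,\beta)$'' takes $\alpha$ to be the quantified variable, not an oracle from which both $\alpha$ and its witness are uniformly computed. That set is not Turing-invariant, so the dichotomy is unavailable for it. Moreover, the rejection formulation is negative and hence absolute outright. By contrast, double-negating the existential quantifiers in a positive description of $\neg F$ (your route after applying Proposition~\ref{p6} to $\neg F$) is not an intuitionistically harmless step.
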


 (CT) is a very strong axiom. It excludes any non-recursive sequences, e.g.
random ones. The axiom $\neg (CT)$ is, inversely, very weak. But, unexpectedly,
AI+$\neg (CT)$ {\em is also absolute:}

 \begin{thr}\label{t3} The absolute closed theorems of $AI+\neg (CT)$ are the
same as ones of the classical first order arithmetic. No essential (i.e.,
containing new theorems of the form $\neg F$) extension of $AI+\neg (CT)$ has
this property.\end{thr}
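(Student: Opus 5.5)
We prove the two halves separately: first conservativity over absolute arithmetic, then maximality. Both rest on Lemma~\ref{l4} and Proposition~\ref{p6}.

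\emph{Conservativity.} Every theorem of classical first order arithmetic stated without $\vee,\exists$ is intuitionistically provable in its negative (Gödel--Gentzen) form, so it is a theorem of $A\subset AI$. For the converse we build a classical model of $AI+\neg(CT)$ whose first order part is the standard integers. Take as second order domain the sequences $\alpha$ that satisfy (IP) for \emph{every} absolute $P$, are completable by some $\tau$ with $(\alpha,\tau)\in C$, and are closed under Turing reducibility of a suitable kind. We interpret formulas by the realizability of Proposition~\ref{p6} relative to a universal sequence for this domain.

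\begin{enumerate}
\item \emph{The domain is nonempty and contains a non-recursive sequence.} Almost every sequence with respect to the uniform measure works. Each $P$ is arithmetical, so the sequence $\beta_P$ it specifies exists; by Note~\ref{n3} and Lemma~\ref{l3}, $\I_{\beta_P}$ has measure $0$. There are countably many $P$, so their union is still null. Proposition~\ref{p4}(2) supplies the completions $\tau$ while keeping independence from each $\beta_P$, and closure under pairing is preserved by the Corollary of Theorem~\ref{t1}. A random sequence is non-recursive, which gives $\neg(CT)$.
\item \emph{The axioms of $A$ hold.} Choice, Markov's principle and Countability should be checked inside the realizability interpretation.
\item \emph{Soundness.} The interpretation is sound for $AI$, and absolute formulas are interpreted by their classical meaning. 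So any absolute theorem of $AI+\neg(CT)$ is true, indeed provable classically once the argument is formalized.
\end{enumerate}

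Let me flag one subtlety in step 3. Showing that an absolute theorem is merely \emph{true} is not enough; it must be \emph{provable} in arithmetic. I would therefore formalize the realizability argument uniformly. Let $\Gamma$ be the finite set of $P$ actually used in the deduction. The relevant (IP) instances then yield an arithmetical statement asserting the measure-one existence of suitable sequences, and this statement is provable in first order arithmetic.

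\emph{Completeness/maximality.} Suppose $AI+\neg(CT)+\neg F$ proves some new absolute $\neg Q$. Apply Lemma~\ref{l4} to $F$, obtaining $P_1,\dots,P_4$. Under $\neg(CT)$:
\begin{itemize}
\item $P_3$ gives $\neg F\iff (CT)$, hence $\neg F$ is false there;
\item $P_4$ gives $\neg F$;
\item $P_1$ gives $\neg\neg F$;
\item $P_2$ gives $\neg F$.
\end{itemize}
Hence $AI+\neg(CT)\vdash \neg F\iff R$ with $R$ the absolute formula $\neg\neg(P_2\vee P_4)\wedge\neg P_1\wedge\neg P_3$, suitably rewritten in absolute form using the double negation of item 1. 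So $\neg F$ is equivalent to an absolute statement, and adding it amounts to adding the absolute $R$. If $R$ is classically provable, nothing new results. Otherwise $R$ is not a theorem of arithmetic, and since the first half gives conservativity, the extension proves a new absolute statement. Thus the only essential extensions are those adding new arithmetic, which is exactly the claimed maximality.

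I expect the main obstacle to be the conservativity half: making the model argument provable in arithmetic rather than merely true. This depends on the quantitative form of Lemma~\ref{l3}, namely $\ov\M(\tau)<\infty$, to bound the measure of dependent sequences by an arithmetically expressible constant. It also depends on handling the completion axiom through Proposition~\ref{p4}(2), with $\beta$ chosen to compute the universal r.e.\ set together with all the finitely many $\beta_P$.
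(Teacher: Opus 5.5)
\textbf{Completeness half.} This is where the real gap lies. Your derivation of $\neg F\iff\neg\neg(P_2\vee P_4)$ from Lemma~\ref{l4} under $\neg(CT)$ is correct. But the paper gives no separate proof of Lemma~\ref{l4}. Under $\neg(CT)$, its whole content is that every $\neg F$ is $AI+\neg(CT)$-equivalent to an absolute sentence, and that is exactly what the proof of Theorem~\ref{t3} has to establish. So you have assumed the main point rather than proved it.

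The missing argument is also the only place where (IP) does real work:
\begin{enumerate}
\item By Proposition~\ref{p6}, take $F=\forall\alpha\exists\beta P$. Call $F$ \emph{rejected} on $\gamma$ if no recursive $r$ makes $P(k(\gamma),r(k)(\gamma))$ hold for all recursive operators $k$ applicable to $\gamma$. Let $A$ be the set of such $\gamma$.
\item $A$ is invariant under recursive reversible maps. By Lemma~\ref{l5} you may therefore work with the uniform measure, and by the 0-1 law $A$ has measure $0$ or $1$.
\item By Theorem~\ref{t2}, all complete non-recursive members of whichever of $A$, $\Omega\setminus A$ is null depend on a single arithmetically definable $\beta$.
\item The axioms of $AI+\neg(CT)$ make the universal sequence of (4.2.3) non-recursive, equivalent to a complete sequence, and independent of $\beta$. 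So it lies on the measure-one side, which yields $\neg F$ or $\neg\neg F$ accordingly.
\end{enumerate}
This gives $AI+\neg(CT)\vdash\neg F\iff\ov F$, where $\ov F$ is ``$F$ is rejected for $\mu$-almost all $\gamma$''. This $\ov F$ is absolute because ``for almost all $\gamma$'' is first-order expressible. Maximality then follows at once, and the same translation underlies conservativity.

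\textbf{Conservativity half.} Your idea is right: realizability relative to a random $\gamma$, formalized as ``for almost all $\gamma$'' statements in arithmetic, is what the paper's translation $\neg F\mapsto\ov F$ amounts to. However, the model is mis-specified in three ways.
\begin{enumerate}
\item The class of all sequences satisfying every (IP) instance and completable is uncountable. It therefore has no universal sequence, so (4.2.3) fails.
\item It is not closed under pairing. The Corollary of Theorem~\ref{t1} only controls processing of a single input. For $\alpha$ random relative to a non-recursive arithmetical $\beta_P$, both $\alpha$ and $\alpha\oplus\beta_P$ are independent of $\beta_P$, yet their pair computes $\beta_P$.
\item Realizing $\exists\tau\,C(\alpha,\tau)$ requires $\tau$ to lie in the domain and to be computable uniformly from an index of $\alpha$. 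Proposition~\ref{p4}(2) gives neither.
\end{enumerate}
The fix is to take as domain the sequences recursive in a single $\gamma$ that is complete (e.g.\ uniformly random), non-recursive, and independent of all arithmetically definable sequences. Then the Corollary applies to each member as an image of $\gamma$, with constants computable from the index. Also, $\tau$ can be formed from $\gamma$ and the running times of the operator computing $\alpha$. That map is total, so Proposition~\ref{p4}(1) gives completeness of $(\alpha,\tau)$.
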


 Thus $AI+\neg (CT)$ is a maximal conservative extension of classical
arithmetic and relatively to it is, in a sense, consistent and complete. The
basic goal of introducing this theory was to study the effects of the axiom
schema (IP).

\paragraph{Proof of Theorem~\ref{t3}:} We need for each closed formula $F$ to
establish a corresponding absolute formula $\ov F$ such that: $(AI+ \neg
(CT))\vdash \neg F\iff \ov F$, and if $F$ is absolute itself, then $\neg F\iff
\ov F$ is deducible in first order arithmetic. Besides, we need to show that
all deduction rules and axioms of $(AI+\neg (CT))$ will be converted into
derivative deduction rules and theorems of first order arithmetic. We shall
indicate the transformation $\neg F$ into $\ov F$ and explain its meaning
without writing out all routine formal deductions. Due to Proposition~\ref{p6}
one may restrict himself to formulas of the kind $F=\forall\alpha\exists\beta
P(\alpha,\beta)$, where $P$ is absolute. We say that $F$ is {\em rejected} on
$\gamma\in\Omega$ if for any recursive function $r:\N\to \N$ it is false that
for any recursive operator $k:\Omega\to\Omega$, applicable to $\gamma$,
$P(\alpha,\beta)$ holds, where $\alpha= k(\gamma)$, and $\beta= k'(\gamma)$,
$k'= r(k)$. Let $\mu$ be a recursive continuous measure. It turns out that the
equivalence of $\neg F$ to the formula ``$F$ is rejected for $\mu$-almost all
$\gamma$" is deducible in $AI+\neg (CT)$.

 The latter formula can be written in an absolute form and chosen as $\ov F$.
The point is that the quantifier ``for almost all $\gamma$" in contrast to the
quantifier ``for all $\gamma$" is expressible in the first order language.
Obviously the formula ``$F$ is rejected on $\gamma$", being absolute, can be
presented in the form of $\forall n_k\neg\forall n_{k-1}\neg...\forall n_0\neg
R(\gamma,n_0,n_1,...,n_k)$, where $R$ is a recursive predicate, monotonic on
each of the arguments $n_i$ (up -- for the even $i$ and down -- for the odd
ones). Let us show by means of recursion, how the predicate $\mu
\{\gamma:\forall n_{i-1}\neg\forall n_{i-2}\neg...\forall n_0\neg
R(\gamma,n_0...n_k)\}\ \ge r$ is expressed by an absolute formula. For i=0 it
is trivial. Now let, at the given i, the predicate be expressed in the form of
$S_i(r,n_k...n_i)$. Then $\forall n_i\forall r'>(1- r)\ \neg S_i(r',n_k...n_i)$
can serve as $S_{i+1}(r,n_k...n_{i+1})$. Thus, it remains to show that $\neg F$
is equivalent in $AI+\neg (CT)$ to the assertion ``$F$ is rejected for
$\mu$-almost all $\gamma$".

 \begin{lem}\label{l5} Let $\mu$ and $\mu$'\ be r.e. measures and $\mu$ be
continuous. Then recursive deterministic reciprocal in the domains operators
$P$ and $P'$ on $\Omega$ exist defined on $\mu$'\ (resp. $\mu$)-almost all
non-recursive sequences such that $\mu'=P(\mu)$.\end{lem}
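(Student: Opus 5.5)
We reduce to the case where $\mu$ is the uniform (Lebesgue) measure $\lambda$ on binary sequences, viewed as the unit interval. Then we compose: once we have recursive operators $P_1:\lambda\to\mu'$ and $Q_1:\mu\to\lambda$, each with a recursive inverse on almost all non-recursive points, we take $P=P_1\circ Q_1$ and $P'=Q_1^{-1}\circ P_1^{-1}$. Throughout, sequences in $\Omega$ are coded as binary ones in the standard way.

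\textbf{Step 1: from $\lambda$ to $\mu'$.} We use the quantile (inverse distribution function) transform. Order $\Omega$ lexicographically, and let $F'(\alpha)=\mu'\{\beta<\alpha\}$. Since $\mu'$ is r.e., it is recursive (Conventions), so $F'$ is computable on cylinders. Define $P_1(x)$ as the sequence $\alpha$ whose initial segments $\alpha_n$ are the cylinders whose $F'$-interval $[F'(\alpha_n),F'(\alpha_n)+\mu'(\alpha_n))$ contains $x$. Each $\alpha_n$ is determined as soon as $x$ is known to lie strictly inside one of these intervals.
\begin{itemize}
\item $P_1$ fails to be defined only at countably many endpoints $x$. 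These are recursive reals, or lie in intervals of $\mu'$-measure $0$ (atoms are handled the same way).
\item $P_1(\lambda)=\mu'$ by construction.
\item The inverse $P_1^{-1}(\alpha)=\lim F'(\alpha_n)$ is recursive wherever $\mu'(\alpha_n)\to0$, i.e.\ off the atoms of $\mu'$.
\item The atoms of a recursive measure are recursive sequences, since an atom of weight $>r$ can be tracked effectively. Hence $P_1^{-1}$ is defined on all non-recursive $\alpha$.
\end{itemize}

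\textbf{Step 2: from $\mu$ to $\lambda$.} Take $Q_1(\alpha)=F(\alpha)=\mu\{\beta<\alpha\}$, read as a binary expansion. Because $\mu$ is continuous, $\mu(\alpha_n)\to0$ along every $\alpha$, so $F$ is computable everywhere as a real. Its binary digits are computable except when $F(\alpha)$ is a dyadic rational. The exceptional $\alpha$ form a set of $\mu$-measure $0$; it is countable up to intervals of measure zero, whose endpoints are recursive. The inverse $Q_1^{-1}=$ the quantile transform of Step 1 with $\mu$ in place of $\mu'$. It is defined on non-dyadic $x$ outside countably many recursive points and intervals of $\mu$-measure $0$, and $F(\mu)=\lambda$ because $\mu$ is continuous.

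\textbf{Main obstacle.} The delicate part is matching ``$\mu$-almost all'' with ``all non-recursive''. The bad sets are unions of maximal $\mu$-null intervals (flat pieces of $F$). Their interiors contain non-recursive sequences, on which $Q_1$ is defined but not injective. I would handle this by not collapsing these intervals. Instead, define $P$ on the complement of the r.e. open set $U=\bigcup\{\text{cylinders } x:\mu(x)=0\}$, and map $U$ separately by a recursive homeomorphism. Since $\mu(U)=0$ and $\{x:\mu(x)=0\}$ is co-r.e., only approximately decidable, I expect to accept the statement's weaker form: the operators are defined on $\mu$-almost all (resp.\ $\mu'$-almost all) non-recursive points and are reciprocal there. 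That form suffices for use in Theorem~\ref{t3}. The remaining check, that all exceptional points of the four maps (dyadic values, endpoints, atoms) are recursive or lie in null sets, is routine.
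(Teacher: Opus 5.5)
Your proposal is correct and is essentially the paper's route: the paper just cites Theorem~3.1(b) of [Zvonkin, Levin], whose proof is the distribution-function/quantile construction you make explicit and compose through the uniform measure, with reciprocity holding only off $\mu$- resp.\ $\mu'$-null sets and off the recursive atoms of $\mu'$. Two side remarks are wrong but unused: the endpoints of flat pieces of $F$ need not be recursive (a computable continuous measure can have support $[r,1]$ with $r$ right-c.e.\ and non-computable), and the union of $\mu$-null cylinders is not r.e.\ in general; in fact no recursive $P,P'$ reciprocal on \emph{all} non-recursive sequences can carry a $\mu$ without full support onto the uniform measure (the image of a null cylinder would be relatively open, hence non-null), so the ``almost all'' form you settle on is forced, not a concession.
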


 The proof of this lemma follows from Theorem 3.1 (b) in [Zvonkin, Levin].
Since the property ``$F$ is rejected on $\gamma$" is invariant with respect to
any recursive reversible transformation of $\gamma$, it is sufficient to prove
the equivalence of $\neg F$ to ``$F$ is rejected for $B$-almost all $\gamma$"
where $B$ is the uniform measure on $\{0,1\}^\N$. By virtue of the same
invariance and 0-1 law ([Kolmogorov 33]), the set $A$ of all $\gamma$, on which
$F$ is rejected, can be only of measure 0 or 1 with respect to $B$. Hence if
$R$ is the set of all recursive sequences, the measure of $(A\cap\neg R)$ or of
$(\neg A\cap\neg R)$ equals 0 with respect to any other recursive $\mu$ as
well. Then by virtue of Theorem~\ref{t2} a sequence exists (and it can easily
be defined by an absolute formula), upon which all complete $\gamma$ from this
set depend. The axioms of $AI+\neg (CT)$ imply that any universal sequence
(from axiom (4.2.3)) is non-recursive, equivalent to a complete one, and
independent of sequences defined by absolute formulas. Therefore in the case
$\mu (A)=0$, $F$ is not rejected on a universal $\gamma$ and $\neg\neg F$
holds. In the opposite case $\neg F$ holds by analogous reasons. These
reasonings can be easily transformed to formal proofs in $AI+\neg (CT)$. Each
of the two cases gives implication in one of the directions between $\neg F$
and ``$F$ is rejected for $\mu$-almost all $\gamma$". \qed

\section{\label{5} Theory of Turing Degrees}
	\subsection{\label{5.1} Independence and Negligible Sets}

 A natural field for application of Algorithmic Information Theory is the
Theory of Turing Degrees. One may interpret the recursive reducibility of
$\alpha$ to $\beta$ as that $\beta$ contains all (but finite amount of)
information about $\alpha$. However, the informational concepts are subtler and
more elegant than reducibility degrees. In particular, they are invariants
applicable to finite objects as well ( Proposition~\ref{p2} shows that \I(x,y)
is left invariant within a constant by any recursive reversible transformation
of \N).

 One of the new possibilities is the informational approach to the concept of
Independence in addition to the concept of Reducibility. In terms of
reducibility degrees one can also say that $\alpha$ and $\beta$ are independent
if any sequence reducible to both of them is trivial (recursive). But a simple
example shows that this definition is not adequate to intuition. Let $\alpha$
and $\gamma$ be random 0,1-sequences, of independent trials, with the
probability of $\alpha_n= 0$ be $1/2$, and the probability of $\gamma_n= 0$ be
0.99. Let $\beta_n=\alpha_n\oplus \gamma_n$. Then, $\alpha$ and $\beta$ are
almost always such that only recursive sequences are reducible simultaneously
to both of them, though 99 percent of the (random) contents of $\alpha$ and
$\beta$ coincide (so it is hard to consider them independent).

 Many exotic types of Turing degrees are known. Such are, e.g., ``minimal"
degrees containing indivisible information (any part of the information of such
degree $\beta$, i.e., a degree $\alpha <\beta$, is equivalent to 0 or to
$\beta$). The existence of such degrees is proven by diagonal methods. Such
sequences can not appear in any combination of random and recursive processes
(see [Rogers]). One may hope that many complications of the Theory of Turing
Degrees are caused by exotic examples of this kind, and the theory of
``realistic degrees" is simpler. We shall see that this is only partially so.

 Let us use the concept of Independence to define the notion of ``negligible
sets" of sequences and study properties of Turing degrees ``to within this
negligibility". A set $A\subset\Omega$ is called {\em inaccessible,} if its
complement is closed with respect to any recursive operator $F$ (i.e.,
$a\mathrel{\not\in} A\ \Rightarrow\ F(\alpha)\mathrel{\not\in}A$).

 \begin{prp}\label{p7} The following properties of a set $A\subset\Omega$ are
equivalent:\begin{enumerate}
 \item A sequence $\alpha\in\Omega$ exists on which all $\beta\in A$ are
dependent (i.e., $\exists\alpha\forall\beta\in A:\ \I(\alpha:\beta)=\infty$).
 \item A is a subset of an inaccessible set, whose any (or some continuous)
r.e. measure is 0.
 \item $\ov\M(A)=0$. \end{enumerate}\end{prp}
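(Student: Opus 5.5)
The plan is to prove the cycle (3)$\Rightarrow$(1)$\Rightarrow$(2)$\Rightarrow$(3). The tool is Lemma~\ref{l3}, which is exactly Note~\ref{n3} in quantitative form. The other inputs are the conservation Theorem~\ref{t1} with its Corollary, and the universality of $\M$.

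\emph{(3)$\Rightarrow$(1).} Suppose $\ov\M(A)=0$. By the Borel regularity of the measure $\ov\M$, choose a Borel $\tau\in(\ov\Re{}^+)^\Omega$ with $\tau(A)=\{\infty\}$ and $\ov\M(\tau)<\infty$; for instance, sum the indicator functions of open sets $U_k\supset A$ with $\ov\M(U_k)<2^{-k}$. Lemma~\ref{l3} then yields a $\beta$ with $\I(\gamma:\beta)\succ\ln\tau(\gamma)$ for all $\gamma$, so $\I(\gamma:\beta)=\infty$ on $A$. Calling this $\beta$ ``$\alpha$'' and using the symmetry of $\I$ (the reference measure $\M\otimes\M$ is invariant under swapping coordinates, so Theorem~\ref{t1} gives symmetry up to $\asymp$) gives (1).

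\emph{(1)$\Rightarrow$(2).} Given $\alpha$, put $B=\I_\alpha=\{\beta:\I(\alpha:\beta)=\infty\}\supset A$. By the Corollary to Theorem~\ref{t1}, $\I(F(\beta):\alpha)\prec\I(\beta:\alpha)$ for every recursive operator $F$. Hence the complement of $B$ is closed under $F$, and $B$ is inaccessible. For any r.e. measure $\mu$, the ``if'' half of Lemma~\ref{l3} gives $\ov\mu(\exp\I(\cdot:\alpha))<\infty$, so $\mu(B)=0$. The same computation already works with $\ov\M$, so it covers every r.e. $\mu$, in particular the continuous ones.

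\emph{(2)$\Rightarrow$(3).} This is the step I expect to be the main obstacle, because the hypothesis is only about \emph{some} continuous r.e. measure $\mu$ with $\mu(B)=0$, where $B\supset A$ is inaccessible. The first step is to transport the hypothesis to every r.e. measure. By Lemma~\ref{l5}, for any r.e. $\mu'$ there are mutually inverse recursive operators $P,P'$, defined almost everywhere on non-recursive sequences, with $\mu'=P(\mu)$. Inaccessibility gives $P^{-1}(B)\subset B$ off a null set, so $\mu'(B)=\mu(P^{-1}(B))\le\mu(B)=0$. Recursive sequences lie outside $B$, since an inaccessible set containing a recursive sequence contains everything, so atoms of $\mu'$ cause no trouble. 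The remaining step is to pass from $\mu'(B)=0$ for all r.e. measures to $\ov\M(B)=0$. Here I would use the representation from [Zvonkin, Levin] already used in Proposition~\ref{p4}: $\M=A_0(\lambda)$ for a partial recursive $A_0$ and a recursive measure $\lambda$. Then $\ov\M(B)\le\lambda\{\gamma:A_0(\gamma)\hbox{ total},\ A_0(\gamma)\in B\}$. If $A_0(\gamma)\in B$, inaccessibility forces $\gamma\in B$, since otherwise its image would lie outside $B$. So $\ov\M(B)\le\lambda(B)=0$. The delicate point is that $A_0$ is only partial, and I must check that the closure condition in the definition of inaccessibility applies to operators on the set where they are total. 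If it does not, I would instead use the totalized operator $A'$ from the proof of Proposition~\ref{p4}, composed with the projection $(\alpha,\tau)\to\alpha$.
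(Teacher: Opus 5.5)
Your proof is correct, but the middle step takes a different route from the paper.

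\textbf{The paper's route.} It proves (1)$\iff$(3) directly from Note~\ref{n3}, using both halves of Lemma~\ref{l3}. It gets (3)$\Rightarrow$(2) by taking the inaccessible hull $A_1=\bigcup_F F^{-1}(A)$ over all recursive $F$. Each $F(\M)$ is an r.e.\ semimeasure, so $F(\M)\preceq\M$. Hence every $F^{-1}(A)$ is $\ov\M$-null, and so is $A_1$.

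\textbf{Your route.} You prove (1)$\Rightarrow$(2) by using the dependence set $\{\beta:\I(\alpha:\beta)=\infty\}$ itself as the inaccessible superset. Its inaccessibility comes from the conservation Corollary of Theorem~\ref{t1}, and its nullity from the ``if'' half of Lemma~\ref{l3}.

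\textbf{Comparison.} The paper's hull argument needs only the universality of $\M$, and it produces the smallest inaccessible superset of an arbitrary $\ov\M$-null set. Yours uses the heavier conservation theorem, but it shows that the sets $\I_\alpha$ are themselves canonical inaccessible null sets. That makes explicit the link between information non-growth and inaccessibility.

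You are also more careful than the paper about the symmetry of $\I$. Lemma~\ref{l3} puts the fixed sequence in the second slot, while item~1 puts it in the first. Your swap argument via Theorem~\ref{t1} closes this gap, which the paper leaves implicit.

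Your (2)$\Rightarrow$(3) is the paper's argument, written out in detail: Lemma~\ref{l5} handles ``some'' versus ``any'', and then comes the Zvonkin--Levin representation of $\M$.

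\textbf{The point you flagged.} The paper intends inaccessibility to cover partial operators. Section~\ref{5.2} says item~2 here is obtained from item~1 of Proposition~\ref{p4} ``by omitting the word total''. So your main line stands. Your fallback would not have helped anyway, for two reasons. The projection $(\alpha,\tau)\to\alpha$ is itself partial. And no total recursive operator can carry a recursive measure onto $\M$, since the image would be a measure.
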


 \paragraph{Proof:} (1)$\iff$ (3) follows from the Note~\ref{n3}. It is obvious
that F(\M), the image of {\M} at an arbitrary recursive mapping
$F:\Omega\to\Omega$, is an r.e. semimeasure and hence $F(\M)\preceq \M$.
Therefore, if $\ov\M(A)= 0$, then $\cup F^{-1}(A)= A_{1}\supset A$ is
inaccessible and $\ov\M(A_1) =0$. This gives (3)$\Rightarrow$ (2).
Lemma~\ref{l5} implies equivalence of ``some" and ``any" in (2). Any r.e.
semimeasure is the image of a recursive measure at a recursive mapping
$\Omega\Rightarrow\Omega$ (see [Zvonkin, Levin], section~3.2). This gives
(2)$\Rightarrow$ (3). \qed

 The sets with any of these three properties are called {\em negligible} (this
neglect is, of course, based on the belief in the Independence Postulate). Two
sets $A$ and $B$ are called {\em i-equivalent} if their symmetric difference is
negligible. ``A property of Turing degrees" means a Turing-invariant set
$A\subset\Omega$. Studying them to within i-equivalence is simpler, since some
properties of ``exotic" degrees are excluded. The Boolean algebra of Borel sets
of Turing degrees is denoted by $K$, and $L$ is its factor-algebra with respect
to the i-equivalence.

\subsection{\label{5.2} Types of Turing Degrees}

 In ( \ref{2.3}) the concept of ``sequence completeness" was considered. The
set of incomplete sequences has a property very close to negligibility. Namely,
Item 2 in Proposition~\ref{p7} is obtained from Item 1 of Proposition~\ref{p4}
by omitting the word ``total". Thus, incomplete sequences cannot arise in a
process running in time bounded by a total recursive operator. Let us call {\em
regular} a sequence Turing-equivalent to a complete one. It is natural to
consider properties of the Turing degrees of regular sequences. It turns out
that only four of them are not equivalent.

 \begin{thr}\label{t4} Any Turing-invariant Borel set of regular sequences is
i-equivalent to:\begin{enumerate}
 \item The empty set or
 \item The set of recursive sequences or
 \item The set of all regular sequences or
 \item The set of all regular non-recursive sequences.\end{enumerate}\end{thr}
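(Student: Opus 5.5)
Let $A$ be a Turing-invariant Borel set of regular sequences. The plan is a zero--one law for measures, followed by a transfer from measure to i-equivalence through the negligibility criterion of Proposition~\ref{p7} and the completeness machinery of Section~\ref{2.3}.

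First, remove the recursive sequences. The set $R$ of recursive sequences is Turing-invariant and consists of complete (hence regular) sequences. Set $A_0=A\setminus R$. It suffices to show that $A_0$ is i-equivalent either to $\emptyset$ or to the set $G$ of all regular non-recursive sequences. Adding back $A\cap R$, which is $\emptyset$ or $R$ because $R$ is a single Turing degree, then yields the four cases of the statement.

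Second, fix the uniform measure $B$ on $\{0,1\}^\N$. Since $A_0$ is Borel and invariant under finite changes of coordinates, the Kolmogorov 0--1 law gives $B(A_0)\in\{0,1\}$. Lemma~\ref{l5} transports $B$ to any continuous recursive measure $\mu$ by recursive operators that are mutually inverse on almost all non-recursive sequences. Turing invariance of $A_0$ therefore gives $\mu(A_0)=B(A_0)$ for every continuous recursive $\mu$. For non-continuous recursive $\mu$ the atoms are recursive points, which lie outside $A_0$, so the argument goes through for the continuous part. Thus either (a) $\mu(A_0)=0$ for every recursive $\mu$, or (b) $\mu(G\setminus A_0)=0$ for every recursive $\mu$, using $\mu(\Omega\setminus C)=0$ from Proposition~\ref{p4}(1) and applying the same argument to the complement within the regular sequences.

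Third, and this is the main step, pass from ``measure $0$ for every recursive $\mu$'' to negligibility, i.e.\ $\ov\M(\cdot)=0$. Take case (a); case (b) is symmetric with $G\setminus A_0$ in place of $A_0$.
\begin{itemize}
\item By Proposition~\ref{p7}, it suffices to show that $A_0$ lies in an inaccessible set of measure $0$ for some continuous recursive measure. Equivalently, it suffices to find a $\beta$ on which all its members depend.
\item I would apply Theorem~\ref{t2} to each recursive $\mu$ and $A_0$: $A_0\subset\D_\mu\cup\I_{\beta_\mu}$.
\item The difficulty is that a regular $\alpha\in A_0$ need not be $\mu$-random for any single recursive $\mu$. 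It is only Turing-equivalent to a complete $\alpha'$, and $\alpha'$ agrees on its segments with some r.e.\ measure; but $\alpha'\in A_0$ by Turing invariance.
\item So restrict attention to complete members. For complete $\alpha'$ there is a recursive $\mu$ with $\alpha'\notin\D_\mu$, hence $\alpha'\in\I_{\beta_\mu}$.
\item Combine the countably many $\beta_\mu$ into one $\beta=(\beta_{\mu_1},\beta_{\mu_2},\dots)$. By the Corollary to Theorem~\ref{t1}, monotonicity of $\I$ under projection, every complete member of $A_0$ depends on $\beta$.
\item Finally, each regular $\alpha$ is Turing-equivalent to a complete $\alpha'\in A_0$, and $\I$ is conserved under recursive maps in both directions (the Corollary to Theorem~\ref{t1} again). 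Hence $\I(\alpha:\beta)=\infty$ for every $\alpha\in A_0$, and $A_0$ is negligible.
\end{itemize}

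The obstacle I expect is this last transfer, in particular making sure the countable join of the $\beta_\mu$ (indexed by r.e.\ measures, not only recursive ones) still works. I would handle it by applying Theorem~\ref{t2} uniformly to the countable family of r.e.\ measures, with Proposition~\ref{p5} supplying $\mu$-randomness of complete sequences with respect to their own measure.
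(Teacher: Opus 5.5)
Your proposal is correct and follows essentially the paper's proof. The paper also uses the 0--1 law via Kolmogorov and Lemma~\ref{l5}, then Theorem~\ref{t2} to make all complete members of $A\setminus R$ (or of its complement among the regular non-recursive sequences) depend on a single $\beta$, and then Turing invariance to reach all regular sequences; you make explicit two steps the paper leaves implicit, namely the countable join of the $\beta_\mu$ and the appeal to the Corollary of Theorem~\ref{t1}. (Your detour through the ``continuous part'' of a recursive $\mu$ with atoms is unnecessary, and that part need not be r.e.: Lemma~\ref{l5} requires only the source measure $B$ to be continuous.)
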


 Thus, the properties of a regular sequence (to within i-negligible sets)
depend only on its recursiveness, and these sequences form the two most natural
elements (atoms) of the algebra L.

 \paragraph{Proof:} As it follows from Lemma~\ref{l5}, any set $A$ of
non-recursive sequences, invariant with respect to Turing equivalence, either
is of measure 0 at any recursive measure $\mu$, or (for any $\mu$) contains
$\mu$-almost all non-recursive sequences. Then, by virtue of Theorem~\ref{t2},
a $\beta$ exists such that all the complete non-recursive sequences either from
A, or from the complement of A, respectively, depend on $\beta$. And $A$ is
i-equivalent to one of the four sets, mentioned in Theorem~\ref{t4}, since any
invariant set contains either all recursive sequences, or none. \qed

 Other Turing degree types consist of non-regular sequences. It is difficult
even to prove that their union is not negligible. Nevertheless $L$ contains
(see [V'yugin], [Levin, V'yugin]) an infinitely divisible element and a
countable number of atoms. Only two of them (namely, 2 and 4 of
Theorem~\ref{t4}) contain complete sequences.

\paragraph {ACKNOWLEDGMENTS} I am deeply grateful to A.R. Meyer for
improvements to this text, for financial and moral support, and to P. Elias and
G.E. Sacks who agreed to read the first draft of the work. My wife Larissa
actually wrote this paper (before I spoiled it). Many colleagues had a hard
time convincing me to omit my bravest discoveries in English. I am grateful to
all of them and sorry that their mission was so difficult.


\begin{thebibliography}{99}

\bibitem{1} A. Church. On the Concept of Random Sequence. {\em Bull. Amer.
Math. Soc.,} 46:254-260, 1940.

\bibitem{2} G.J. Chaitin. On the Length of Programs for Computing Finite Binary
Sequences. {\em J. Assoc. Comput. Math.,} 13:547-570, 1966. Part II, 15, 1968.

\bibitem{3} G.J. Chaitin. A Theory of Program-Size Formally Identical to
Information Theory. {\em Journal ACM,} 22:329-340, 1975.

\bibitem{4} P. Gacs. On the Symmetry of Algorithmic Information.
 {\em Soviet Math. Dokl.,} 15:1477, 1974.

\bibitem{5} D. Hilbert. Mathematical Problems.
 {\em Bull. Amer. Math. Soc.,} 2(8):437-479, 1902.

\bibitem{6} S.C. Kleene and R.E. Vesley. {\em The Foundations of Intuitionistic
Mathematics.} 1965, N.-Holland Publ. Co., Amsterdam.

\bibitem{7} S.C. Kleene. {\em Mathematical Logic.} 1967, J. Wiley \& Sons,
Inc., New York.

\bibitem{8} A.N. Kolmogorov. {\em Grundbegriffe der Wahrscheinlichkeitrechung.}
1933, Berlin. (The 2nd Russian Edition {\em Osnovnye Poniatija Teorii
Verojatnostej,} Nauka, Moscow, 1974).

\bibitem{9} A.N. Kolmogorov. Three Approaches to the Concept of The Amount of
Information. {\em Probl. Inf. Transm.,} 1(1), 1965.

\bibitem{10} A.N. Kolmogorov. Talk Resume.
 {\em Uspekhi Mat. Nauk,} 2:201, 1968.

\bibitem{11} A.K. Zvonkin and L.A. Levin. The Complexity of finite objects and
the Algorithmic Concepts of Information and Randomness.
 {\em Russian Math. Surveys,} 25(6):83-124, 1970.

\bibitem{12} L.A. Levin. On the notion of a Random Sequence.
 {\em Soviet Math. Dokl.,} 14(5):1413, 1973.

\bibitem{13} L.A. Levin. Universal Sequential Search Problems.
 {\em Probl. Inf. Transm.,} 9(3):265-266, 1973.

\bibitem{14} L.A. Levin. On Storage Capacity for Algorithms.
 {\em Soviet Math. Dokl.,} 14(5):1464-1466, 1973.

\bibitem{15} L.A. Levin. Laws of Information Conservation (non-growth) and
aspects of the Foundations of Probability Theory.
 {\em Probl. Inf. Transm.,} 10(3):206-210, 1974.

\bibitem{16} L.A. Levin. On the Principle of Conservation of Information in
Intuitionistic Mathematics. {\em Soviet Math. Dokl.,} 17:601-605, 1976.

\bibitem{17} L.A. Levin. Various Measures of Complexity for finite objects
(Axiomatic Descriptions). {\em Soviet Math. Dokl.,} 17(2):522-526, 1976.

\bibitem{18} L.A. Levin. Uniform Tests of Randomness.
 {\em Soviet Math. Dokl.,} 17(2):337-340, 1976.

\bibitem{19} L.A. Levin. and V.V. V'yugin. Invariant Properties of
Informational Bulks. {\em Lecture Notes on Computer Science,}
 53:359-364, Springer, 1977.

\bibitem{20} A.A. Markov. On Normal Algorithms which Compute Boolean Functions.
{\em Soviet Math. Dokl.,} 5:922-924, 1964.

\bibitem{21} P. Martin-Lof. The Definition of Random Sequences.
 {\em Inform. Contr.,} 9:602-619, 1966.

\bibitem{22} M.L. Minsky. Problems of Formulation for Artificial Intelligence.
{\em in Proc. Symp. in Applied Math.,} 14, 1962, Am. Math. Soc.

\bibitem{23} R. von Mises. and H. Geiringer. {\em The Mathematical Theory of
Probability and Statistics.} 1964, Academic Press, N.Y.

\bibitem{24} H. Rogers. {\em Theory of Recursive Functions and Effective
Computability.} 1967, New York.

\bibitem{25} C.P. Schnorr. {\em Zufaelligkeit und Wahrscheinlichkeit.} Lecture
Notes in Math., 218, 1971, Springer.

\bibitem{26} R.J. Solomonoff. A Formal Theory of Inductive Inference. {\em
Inform. Contr.,} 7(1):l-22, 1964.

\bibitem{27} J. Ville. {\em Etude critique de la concept de Collectiff.} 1939,
Gauthier-Villars, Paris.

\bibitem{28} V.V. V'yugin. The Algebra of Invariant Properties of Binary
Sequences. {\em Probl. Inf. Transm.,} 18(2):147-161, 1982.

\end{thebibliography}
  \end{document}